\documentclass[twocolumn]{autart}    

\usepackage{graphicx}          

\usepackage{epsfig} 
\usepackage{times} 
\usepackage{amsmath} 
\usepackage{amssymb}  
\usepackage{tcolorbox}
\usepackage{mathtools}
\usepackage{pgfkeys}
\usepackage{xcolor}
\usepackage{pgfplots}
\usepackage{layouts}
\usepackage{tikz}
\usepackage{algorithm}
\usepackage{mathptmx} 
\DeclareMathAlphabet{\mathcal}{OMS}{cmsy}{m}{n} 
\usepackage{xspace}
\usepackage[T1]{fontenc}
\usepackage{wrapfig}

\newlength\figurewidth
\newcommand{\ioss}{i-IOSS\xspace}
\newcommand{\iioss}{i-iIOSS\xspace}

\newcommand{\iiss}{i-iISS\xspace}
\definecolor{new}{rgb}{0,0,0}
\definecolor{rev}{rgb}{0,0,0}

\begin{document}

\begin{frontmatter}

\title{	Sample-based detectability and moving horizon state estimation of continuous-time systems} 


\author[IRT]{Isabelle Krauss}\ead{krauss@irt.uni-hannover.de},    
\author[IRT]{Victor G. Lopez}\ead{lopez@irt.uni-hannover.de},               
\author[IRT]{Matthias A. M\"{u}ller}\ead{mueller@irt.uni-hannover.de}  

\address[IRT]{Leibniz University Hannover, Institute of Automatic Control,  30167 Hannover,  Germany}  

\begin{keyword}                           
	Detectability, Incremental
system properties, Irregular sampling, Moving horizon estimation.              
\end{keyword}                             

	\begin{abstract}
	In this paper we propose a detectability condition for nonlinear continuous-time systems with  irregular/infrequent output measurements,  namely a sample-based version of incremental integral  input/output-to-state stability (\iioss).  We provide a sufficient condition for an \iioss    system to be sample-based \iioss. This condition is also exploited to analyze the relationship between sample-based \iioss and sample-based observability for linear  systems, such that previously established sampling strategies for linear systems can be used to guarantee sample-based \mbox{\iioss}.
	Furthermore, we present a sample-based moving horizon estimation scheme, for which robust stability can be shown. Finally, we illustrate the applicability of the proposed estimation scheme through a  biomedical simulation example.	
\end{abstract}

\end{frontmatter}

	\section{Introduction}
In numerous real-world applications, direct measurement of all system states is impractical or impossible due to physical or technical limitations. However, accurate knowledge of the internal state is essential for tasks such as system monitoring and the design of state feedback controllers. State estimators address this challenge by reconstructing the unmeasured states from the known inputs and measured outputs.
\par An optimization-based solution to the state estimation problem is moving horizon estimation (MHE). MHE is a state estimation technique that solves an 
optimization problem based on the past inputs and outputs over a sliding window. Its strength lies in its capability to directly handle nonlinear systems and  constraints, and to cope with model inaccuracies and measurement noise \cite{Raw24}. For nonlinear systems a popular characterization of detectability, especially in the context of MHE, is incremental input/output-to-state-stability (\ioss), first introduced in \cite{Son97}.  Using \ioss, strong stability results for MHE have been established, compare e.g. \cite{All21,Ji16,Knu23,Sch23,Ale25}.
Recently also a robustly stable MHE scheme for continuous-time systems has been developed in \cite{Sch24} considering an integral variant of \ioss  introduced in \cite{Sch23b} as detectability condition.
\par While standard estimation methods rely on continuous measurements, in many applications we have  only limited output information available.
Such situations may occur, for example, in biomedical applications, where measurements collected through blood samples are available only sparsely or irregularly. This is an issue, for instance, when determining certain hormone concentrations for diagnosing disorders of the pituitary–thyroid  feedback loop \cite{Die16} and devising appropriate medication strategies (see, e.g., \cite{Bru21,Wol22}).
Hence, appropriate state estimation methods are needed that can handle such irregular measurement sequences. 
\par To design such estimators, suitable sample-based observability/detectability conditions are needed that take these irregular/infrequent output sequences directly into account. For linear systems, conditions on the sampling sequence to guarantee that the initial state can be uniquely reconstructed based  on the sampled outputs are addressed in \cite{Wan11,Zen16} for continuous-time systems and in \cite{Kra22} for discrete-time systems.
For nonlinear  discrete-time systems a sample-based formulation of \mbox{\ioss}, that directly takes the irregular output sequence into account, was studied in \cite{Kra25}.
\par 	Exploiting the results in \cite{Kra25}, a discrete-time sample-based MHE scheme for nonlinear systems  was developed in \cite{Kra25d}.
For linear continuous-time systems, a least squares estimation approach for irregular measurement sequences was proposed in \cite{Wan11}.
\color{rev}
Irregular and potentially very infrequent measurement sequences also arise in the event- and self-triggered estimation literature, see, e.g., \cite{Pet24,Niu17} and \cite{And15} for  event-triggered and self-triggered  state estimation methods for continuous-time nonlinear  systems, respectively. In this context, measurement sparsity is not an exogenous constraint, but an online decision dictated by a triggering condition; in event-triggered methods, this involves obtaining the output continuously.
In this work, on the other hand, we exploit a sample-based detectability condition on measurement sequences to construct an estimator for systems in which outputs are only sparsely measurable, as is often the case in biomedical applications, as discussed above. We then propose a moving horizon estimator for scenarios in which measurements are available only irregularly and infrequently.
\color{black}
\par In particular, our contributions are as follows. First, we propose a sample-based version of  incremental integral input/output-to-state stability (\iioss) from \cite{Sch23b} (Section~\ref{sec:sbiIOSS}).  Then, we derive a sufficient condition for an \iioss system to be sample-based \iioss, analogously to the sufficient condition in \cite[Theorems 1,3]{Kra25} for discrete-time systems (Section~\ref{sec:sbiIOSS}). 
Furthermore, we investigate the relationship between sample-based observability and sample-based \mbox{\iioss} for linear systems, where the aforementioned sufficient condition is exploited (Section~\ref{sec:linSys}). Moreover, we propose a robustly stable  sample-based MHE scheme (Section~\ref{sec:sbMHE}). Finally, the presented estimation scheme is applied to the pituitary-thyroid feedback loop (Section~\ref{sec:Sim}). 
\par \textit{Notation:}
Let $\mathbb{R}_{\geq0}$, $\mathbb{R}_{>0}$ be the nonnegative and positive real numbers,  and  $\mathbb{I}_{\geq 0}$, $\mathbb{I}_{>0}$ 
the nonnegative and positive integers, respectively.  $P\succ 0$  ($P\succeq 0$) indicates that a symmetric matrix $P=P^\top$ is positive (semi-)definite.
The euclidean norm of vector $x \in \mathbb{R}^n$   is denoted by $|x|$ and $|x|_P^2=x^\top Px$ with  $P\succ 0$. 
For a matrix $A \in \mathbb{R}^{n\times m}$, $|A|$ refers to the spectral norm of $A$.  
For a measurable, locally essentially bounded function $x:\mathbb{R}_{\geq0} \rightarrow \mathbb{R}^n$, the essential sup-norm on an interval $[a,b]$ is defined as $\|x\|_{a:b}\coloneqq \mathrm{ess} \ \sup_{t\in[a,b]}\left|x(t)\right|$.
By $\sigma_{\mathrm{min}}(P)$  we indicate  the minimum singular value of some matrix $P$. We denote   the  maximum generalized eigenvalue of symmetric 
matrices $P,Q$ by $\lambda_{\mathrm{max}}(P,Q)$.
A function $\alpha:\mathbb{R}_{\geq0} \rightarrow \mathbb{R}_{\geq0}$ is of class $\mathcal{K}_{\infty}$ if it is continuous, strictly increasing, $\alpha(0)=0$, and
$\lim_{s \rightarrow \infty}\alpha(s)=\infty$.\\  
\section{Preliminaries and setup}
\label{sec:Pre}
 We consider the continuous-time nonlinear system 
\begin{subequations}
	\label{eq:sys}
	\begin{align}
			\dot{x}(t)&=f(x(t),u(t),w(t)) \label{eq:sysf}\\
			y(t)&=h(x(t),u(t),v(t)) \label{eq:sysh}
	\end{align}
\end{subequations}
with state $x(t) \in \mathcal{X} \subseteq \mathbb{R}^n$, noisy output measurement $y(t) \in \mathcal{Y} \subseteq \mathbb{R}^p$ and time $t\geq 0$. The  control input $u$, process noise $w$ and measurement noise $v$ are measurable, locally essentially bounded functions taking values in $\mathcal{U} \subseteq \mathbb{R}^m$, $\mathcal{W} \subseteq \mathbb{R}^m$ (with $0 \in \mathcal{W}$) and $\mathcal{V} \subseteq \mathbb{R}^p$ (with $0 \in \mathcal{V}$), respectively. The sets of such functions $u$, $w$ and $v$ and are denoted by $\mathcal{M_U}$, $\mathcal{M_W}$ and $\mathcal{M_V}$, respectively. The  nonlinear continuous functions $f: \mathcal{X} \times \mathcal{U} \times \mathcal{W} \rightarrow \mathbb{R}^n$ and $h: \mathcal{X} \times \mathcal{U} \times \mathcal{V}\rightarrow \mathcal{Y}$ represent the system dynamics and the output model, respectively. Given an initial state $\chi\in\mathcal{X}$, control input $u\in \mathcal{M_U}$ and disturbance input $w\in \mathcal{M_W}$,  we denote  the solution to (\ref{eq:sysf}) at time $t\geq 0$ as $x(t,\chi,u,w)$ and we use the short-hand notation $x(t)$ when $\chi, u, w$ are clear from the context. \color{rev} We assume $x(t)\in \mathcal{X}$ for all $t\geq0$, all $u\in \mathcal{M_U}$ and all  $w\in \mathcal{M_W}$\footnote{This reflects known physical constraints  such as nonnegativity of state variables like concentrations or pressures. Such knowledge is exploited in MHE to improve the estimation result.}. \color{black} 
%
%
Moreover, the output signal is denoted as $y(t,\chi,u,w,v)\coloneq h(x(t,\chi,u,w),u(t),v(t))$. We assume that $\mathcal{X}$, $\mathcal{W}$ and $\mathcal{V}$ are closed and that  the solutions of (\ref{eq:sysf}) are unique and globally defined on $\mathbb{R}_{\geq 0}$ for all $u \in  \mathcal{M_U}$, $w  \in \mathcal{M_W}$  and $\chi \in\mathcal{X}$.
In the following we use the  abbreviated notation 
\begin{align*}
			\begin{aligned}
					\Delta w(t) &\coloneq w_1(t) - w_2(t),\  \Delta v(t) \coloneq v_1(t) - v_2(t)\\
					\Delta x(t) &\coloneq x(t,\chi_1, u, w_1) - {x}(t, \chi_2, u, w_2 ), \ 	 \Delta\chi \coloneq  \chi_1-\chi_2,
					\\	\Delta y(t) &\coloneq y(t,\chi_1,u,w_1,v_1) - y(t, \chi_2, u,w_2,v_2). 
			\end{aligned}
\end{align*}
   Incremental input/output-to-state-stability (\ioss) has become a standard detectability notion in the context of optimization-based state estimation, compare, for example \cite{Raw12,All21,Raw24,Sch23,Hu24}, and was shown to be a necessary (cf. \cite{Son97}) and sufficient (cf. \cite{Knu23}) condition for the existence of robustly stable estimators. In this work we consider an integral variant of \ioss (see Definition~\ref{def:iiioss} below), namely incremental integral input/output-to-state stability (\iioss),  as was proposed in \cite{Sch23b}. Note that this definition of \iioss discounts  the influence of the inputs and outputs in the past. Using such time-discounting allowed the establishment of strong stability results for MHE which, in the discrete-time case, led to tighter upper bounds on the estimation error compared to earlier approaches based on non-time-discounted \ioss \cite{Sch23}. In \cite{Sch23b} it was shown that \iioss is equivalent to the system admitting an \iioss Lyapunov function. A method to construct such an \iioss Lyapunov function has been proposed in \cite{Sch24}, and thus providing a systematic approach to verify \iioss. Furthermore, \iioss is necessary for the existence of  a robustly globally asymptotically stable
observer mapping in a time-discounted "$L^2$-to-$L^{\infty}$" sense \cite{Sch23b}.  
\begin{defn}[\protect{\iioss}]
	\label{def:iiioss}
	The system~\eqref{eq:sys} is   incrementally integral input/output-to-state stable (\iioss) if there exist some $\alpha,\alpha_{x},\gamma_1,\gamma_2,\gamma_3\in\mathcal{K}_{\infty}$, and $\eta\in \mathbb{R}_{> 0}$ such that for any initial conditions  $\chi_1,\chi_2$ and any pairs of disturbance trajectories $w_1,w_2\in\mathcal{M_W}$, $v_1,v_2\in\mathcal{M_V}$ and all $u\in \mathcal{M_U}$ 
	the following holds for all $t \geq 0$
	\begin{align}
		\hspace{-0.8em}
			\begin{aligned}
				\alpha({|}\Delta x(t){|})
				&\leq \alpha_x({|}\Delta \chi{|})e^{-\eta t}+\int_0^t\Big(\gamma_1({|}\Delta w(\tau){|})  \\&+\gamma_2({|}\Delta y(\tau){|})+\gamma_3({|}\Delta v(\tau){|}) \Big) e^{-\eta(t-\tau)}  d\tau.
			\end{aligned}
		\label{eq:ioss}
	\end{align}
	\color{rev}
	If additionally 	$\alpha(s) \geq  C_1 s^r$, $\alpha_x(s) \leq C_2 s^r$ , $\gamma_1(s) \leq C_w s^r$, $\gamma_2(s) \leq C_y s^r$ and $\gamma_3(s) \leq C_v s^r$ for some $C_1, C_2,C_w,C_y,C_v, r >0$, then system~\eqref{eq:sys} is exponentially \iioss.
	\color{black}
\end{defn}
Moreover, if (\ref{eq:ioss}) holds for some $\alpha,\alpha_x,\gamma_1\in \mathcal{K}_\infty$ and $\gamma_2=\gamma_3 \equiv 0$, the system is said to be  incrementally integral input-to-state stable (\iiss) \cite{Ang09}.
\section{Sample-based \iioss}
\label{sec:sbiIOSS}
In the following we introduce a sample-based version of  \mbox{\iioss}.  Then we propose a sufficient condition for an \mbox{\iioss} system to be sample-based \iioss, analogously to the  sufficient condition in \cite[Theorems 1,3]{Kra25} for the discrete-time case.
Before formulating a sample-based version of \iioss, we recall the following definition from \cite{Kra25}.
\begin{defn}[Sampling set $K$ \cite {Kra25}]
	\label{def:K}
	Consider an infinitely long sequence $\mathcal{D}=\{d_1,d_2,\ldots\}$ with $d_i \in \mathbb{I}_{\geq 0}, \ i\in \mathbb{I}_{> 0}$ and  $\max_i d_i \eqcolon d_{\mathrm{max}}<\infty$. 
	The set $K_i=\{t_1^i,t_2^i,\ldots\}$ refers to an infinite set of time instances defined as
	\begin{align*}
		\begin{aligned}
			t_1^i=&d_i\\
			t_2^i=&t_1^i+d_{i+1}\\ 
			&\vdots\\
			t_j^i=&t_{j-1}^i+d_{i+j-1}  \\
			&\vdots
		\end{aligned}                    
	\end{align*} 
	The set $K$ then refers to a set of sets containing all $K_i$, $ i\in\mathbb{I}_{> 0}$.
\end{defn}
\color{rev} The sampling set $K$ is defined as a set of sets, rather than a single sequence of sampling times, to capture sampling schemes that preserve detectability uniformly in time. 
In this sense, we are interested in sampling schemes for which sample-based detectability holds for all $K_i \in K$. 
This means that the state of the system can be reconstructed regardless of the time at which we start using the measured information.
For a more detailed discussion see \cite{Kra25}.
\color{black}
 We now define sample-based \iioss as follows.
\begin{defn}[Sample-based \iioss]
	\label{def:sbIOSS}
	Consider a set $K$ as in Definition~\ref{def:K}.
	The system~(\ref{eq:sys}) is sample-based  \iioss with respect to $K$ if there exist  $\bar{\alpha},\bar\alpha_x,\bar\gamma_1, \bar\gamma_2,\bar\gamma_3 \in  \mathcal{K}_\infty$ and $\bar\eta\in \mathbb{R}_{> 0}$ such that for any pair of initial conditions  $\chi_1,\chi_2$, any pairs of disturbance trajectories  $w_1,w_2 \in \mathcal{M_W}$, $v_1,v_2 \in \mathcal{M_V}$ and all $u\in\mathcal{M_U}$ the following holds for all $t\geq 0$ and any $K_i\in K$
	\begin{align}
			&\bar\alpha({|}\Delta x(t){|})
			\leq \bar\alpha_x({|}\Delta \chi{|})e^{-\bar\eta t}+\int_0^t\bar\gamma_1({|}\Delta w(\tau){|}) e^{-\bar\eta(t-\tau)} \ d\tau \nonumber
			\\
			&+\sum_{\tau \in [0,t] \cap K_i}\big(\bar\gamma_2( {|}\Delta y(\tau){|})+\bar\gamma_3({|}\Delta v(\tau){|})\big) e^{-\bar\eta(t-\tau)}.\label{eq:sbiIOSS}
	\end{align}
	\color{rev}
		If additionally 	$\bar\alpha(s) \geq  \bar{C}_1 s^r$, $\bar\alpha_x(s) \leq \bar{C}_2 s^r$,  $\bar\gamma_1(s) \leq \bar C_w s^r$, $\bar \gamma_2(s) \leq \bar C_y s^r$ and $\gamma_3(s) \leq\bar C_v s^r$ for some $\bar C_1, \bar C_2, \bar C_w,\bar C_y, \bar C_v, r >0$, then  system~(\ref{eq:sys}) is sample-based exponentially \iioss.
		\color{black}
\end{defn} 
Since in this sample-based detectability notion we only consider a discrete measurement  sequence instead of the continuous-time output signal, we write the output-related terms in Definition~\ref{def:sbIOSS} using a sum instead of an integral.
\color{rev} %
This definition states that, if a system is sample-based \iioss according to the proposed definition, then the differences between two system states can be bounded by the differences in their initial states, inputs and \emph{sampled} outputs. 
 Intuitively, if two trajectories experience similar inputs but  their internal states diverge significantly, then this divergence must be captured in their measured outputs. 
\color{black}
\par  For deriving a sufficient condition for sample-based \iioss,  in Theorem~\ref{thm:sufcond} below we  use the following assumption on $f$, that is  Lipschitz continuity of $f$ w.r.t $x$ and $\mathcal{K}$-continuity w.r.t. to $w$.
\begin{assum}
\label{ass:f}
There exist some $\rho \in \mathcal{K_\infty}$ and some $L>0$ such that for all $u\in \mathcal{U}$, all $x_1, x_2 \in\mathcal{X}$ and all $w_1,w_2 \in \mathcal{W}$ it holds that
\begin{align*}
	|f (x_1, u,w_1) - f (x_2,u, w_2)| \leq L(|x_1 - x_2| + \rho(|w_1 - w_2|)).
\end{align*}
\end{assum}
After having stated Assumption~\ref{ass:f} we can now derive the following theorem.
Note that an analogous sufficient condition for sample-based \ioss has been proposed in \cite[Theorems~1,3]{Kra25} for the discrete-time case. While the main intuition behind this sufficient condition is the same, the continuous-time case requires different arguments in the proof of  Theorem~\ref{thm:sufcond} below, \color{rev} highlighting the nontrivial extension to continuous time. \color{black}
Intuitively, condition (\ref{eq:suffcond}) in Theorem~\ref{thm:sufcond} means that there exists some finite time at which a sufficient number of meaningful measurements have been collected, after which any changes  in the system's output can be detected from the limited number of available measurements.
For a more detailed discussion see the remarks below \cite[Theorem~1]{Kra25}.  
\begin{thm}  [Sufficient Condition]
\label{thm:sufcond}
Let system (\ref{eq:sys})  be \iioss and let Assumption \ref{ass:f} hold. Consider some set $K$ according to Definition~\ref{def:K}. The system is sample-based  \iioss if 
there exist a finite $t^*$ and $\alpha_w, \alpha_y, \alpha_v \in \mathcal{K}_{\infty}$ such that for any two initial conditions $\chi_1, \chi_2$ and any pairs of disturbance trajectories $w_1,w_2\in \mathcal{M_W}$, $v_1,v_2\in \mathcal{M_V}$ and all $u\in \mathcal{M_U}$ the following holds for all $t\geq  t^*$ and all $K_i\in K$
\begin{align}
	\begin{aligned}
		\int_{t^*}^t\gamma_2( {|}\Delta y(\tau){|})+\gamma_3( {|}\Delta v(\tau){|})  \ d\tau
		\leq \int_0^t\alpha_w({|} \Delta w(\tau)
		{|}) \ d\tau\\+ \sum_{\tau \in [0,t] \cap K_i}\alpha_y( {|}\Delta y(\tau){|})+\alpha_v( {|}\Delta v(\tau){|}).
	\end{aligned}
	\label{eq:suffcond}
\end{align}
\end{thm}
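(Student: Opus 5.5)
The plan is to turn the undiscounted condition (\ref{eq:suffcond}) into a time-discounted bound by chopping time into windows of bounded length. On each window we pay at most a constant factor in the exponential weight. The \iioss estimate (\ref{eq:ioss}) then converts everything into the form (\ref{eq:sbiIOSS}). Two structural facts are used throughout. First, the system is time invariant. Second, $K$ contains every shifted sampling sequence $K_j$: if we restart the clock at a sampling instant $t^i_k\in K_i$, the remaining instants, shifted by $t^i_k$, form exactly $K_{i+k}\in K$. Hence (\ref{eq:suffcond}) may be applied on any interval that begins at a sampling instant, with initial states $x_1(t^i_k),x_2(t^i_k)$ and time-shifted disturbances. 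Write $g(\tau):=\gamma_2(|\Delta y(\tau)|)+\gamma_3(|\Delta v(\tau)|)$ and $a(\tau):=\alpha_y(|\Delta y(\tau)|)+\alpha_v(|\Delta v(\tau)|)$.

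\emph{Step 1 (windowed version of (\ref{eq:suffcond})).} Fix $K_i$ and $T:=t^*+d_{\max}$. Pick sampling instants $s_0<s_1<\dots$ in $K_i$ with $T\le s_{k+1}-s_k\le T+d_{\max}$; this is possible since consecutive samples are at most $d_{\max}$ apart. Apply (\ref{eq:suffcond}) started at $s_{k-1}$ up to time $s_{k+1}$. Because $s_{k-1}+t^*\le s_k$, this gives
\begin{align*}
\int_{s_k}^{s_{k+1}} g\,d\tau\le \int_{s_{k-1}}^{s_{k+1}}\alpha_w(|\Delta w|)\,d\tau+\sum_{\tau\in[s_{k-1},s_{k+1}]\cap K_i} a(\tau).
\end{align*}
Every window has length at most $2(T+d_{\max})$, and every point lies in at most two windows. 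Multiplying by $e^{-\eta(t-s_{k+1})}$ and summing over $k$ therefore yields
\begin{align*}
\int_{s_1}^{t} g(\tau)e^{-\eta(t-\tau)}d\tau\le c\Big(\int_{s_0}^t\alpha_w(|\Delta w|)e^{-\eta(t-\tau)}d\tau+\sum_{\tau\in[s_0,t]\cap K_i}a(\tau)e^{-\eta(t-\tau)}\Big)
\end{align*}
with $c=2e^{2\eta(T+d_{\max})}$. Here $e^{-\eta(t-s_{k+1})}\le e^{-\eta(t-\tau)}\le c\,e^{-\eta(t-s_{k+1})}$ for $\tau$ in the $k$-th window. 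The last, incomplete window is handled the same way.

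\emph{Step 2 (combine with \iioss).} For $t\ge s_1$ (note $s_1\le t_0:=2T+d_{\max}$, a uniform constant), apply (\ref{eq:ioss}) from time $s_1$, again by time invariance, and insert Step 1. This gives
\begin{align*}
\alpha(|\Delta x(t)|)\le \alpha_x(|\Delta x(s_1)|)e^{-\eta(t-s_1)}+\int_{s_1}^t\big(\gamma_1+c\,\alpha_w\big)(|\Delta w|)e^{-\eta(t-\tau)}d\tau+c\sum_{\tau\in[0,t]\cap K_i}a(\tau)e^{-\eta(t-\tau)}.
\end{align*}
It remains to bound $|\Delta x(\cdot)|$ on the initial segment $[0,t_0]$ without using outputs. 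Assumption~\ref{ass:f} and Gronwall's inequality give
\begin{align*}
|\Delta x(s)|\le e^{Lt_0}\Big(|\Delta\chi|+L\int_0^{s}\rho(|\Delta w(\tau)|)\,d\tau\Big),\qquad s\le t_0.
\end{align*}
This handles both $\Delta x(s_1)$ in the estimate above and the case $t< s_1$ directly. Since $t_0$ is uniform, the factor $e^{\eta s_1}\le e^{\eta t_0}$ is absorbed into the constants, and we set $\bar\eta=\eta$.

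\emph{Main obstacle.} The hard step is converting $\alpha_x\big(e^{Lt_0}L\int_0^{t_0}\rho(|\Delta w|)\,d\tau\big)$ into a term of the form $\int_0^t\bar\gamma_1(|\Delta w|)e^{-\bar\eta(t-\tau)}d\tau$. A $\mathcal{K}_\infty$ function of an integral is not, in general, bounded by an integral of a $\mathcal{K}_\infty$ function, because Jensen's inequality goes the wrong way for concave gains. I would first try to avoid applying a gain to an integral at all. The idea is to use a comparison/Lyapunov argument on the short segment: with the \iioss Lyapunov function from \cite{Sch23b}, or directly a differential inequality for $\alpha(|\Delta x|)$, one obtains additive integral terms in $|\Delta w|$. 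If this fails in general, I would run Step 2 with the weak triangle inequality $\alpha_x(a+b)\le\alpha_x(2a)+\alpha_x(2b)$ and treat the $w$-part separately. In the exponential case ($r$-homogeneous bounds), H\"older's inequality on the finite interval $[0,t_0]$ does give $\big(\int_0^{t_0}\rho\big)^r\le C\int_0^{t_0}\rho^r$ for $r\ge1$. So at least there the estimate closes and yields sample-based (exponential) \iioss with $\bar\gamma_2=c\,\alpha_y$, $\bar\gamma_3=c\,\alpha_v$ and a suitably enlarged $\bar\gamma_1$.
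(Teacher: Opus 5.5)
Your Step~1 (restarting (\ref{eq:suffcond}) at sampling instants via time invariance and the shift structure of $K$), the Gronwall bound on the initial segment, and the restart of the i-iIOSS estimate at $s_1$ are essentially the paper's argument. The paper uses windows $[k_j,k_{j+1}]$ of length between $t^*$ and $2t^*$ with $k_j-t^*\in K_i$, and pays a factor $2e^{3\eta t^*}$.

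The genuine gap is exactly the step you flag as the main obstacle. For general $\mathcal{K}_\infty$ gains you never turn $\alpha_x\big(2e^{Lt_0}L\int_0^{s_1}\rho(|\Delta w(\tau)|)\,d\tau\big)e^{-\eta(t-s_1)}$ (and its analogue for $t<s_1$) into an admissible term. So you only establish the claim for homogeneous gains with $r\ge 1$, whereas the theorem asserts it for arbitrary i-iIOSS gains.

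None of your fallbacks can repair this as long as $\alpha(|\Delta x(t)|)$ is kept on the left-hand side. Suppose $\alpha_x(s)$ behaves like $\sqrt{s}$ near zero, and take $\Delta\chi=0$ and $|\Delta w|=\epsilon$ on $[0,\delta]$. The term is then of order $\sqrt{\delta}$, while any admissible $\int_0^{\delta}\bar\gamma_1(\epsilon)e^{-\eta(t-\tau)}\,d\tau$ is $O(\delta)$. Hence no choice of $\bar\gamma_1$ bounds this term on its own. The i-iIOSS Lyapunov function would reintroduce continuous-time output terms on $[0,t_0]$, which (\ref{eq:sbiIOSS}) does not allow. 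An output-free comparison argument on $[0,t_0]$ only relocates the difficulty to the junction at $s_1$, where $\alpha_x(|\Delta x(s_1)|)$ enters the estimate additively.

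The missing idea is to exploit the freedom in $\bar\alpha$. As in the paper:
\begin{itemize}
\item Bound $\alpha_x\le\gamma_c\circ\gamma_v$ with $\gamma_c$ a concave and $\gamma_v$ a convex $\mathcal{K}_\infty$ function.
\item Apply Jensen's inequality to $\gamma_v$ on $[0,t_0]$.
\item Use concavity of $\gamma_c$ (so $\lambda\gamma_c(s)\le\gamma_c(\lambda s)$ for $\lambda\in[0,1]$) to move $e^{-\eta(t-s_1)}$ inside $\gamma_c$.
\end{itemize}
Since $e^{-\eta(t-s_1)}\le e^{\eta t_0}e^{-\eta(t-\tau)}$ for $\tau\ge 0$, the term is bounded by $\gamma_c(Y)$ with $Y=\frac{e^{\eta t_0}}{t_0}\int_0^{s_1}\gamma_v\big(2t_0e^{Lt_0}L\rho(|\Delta w(\tau)|)\big)e^{-\eta(t-\tau)}\,d\tau$. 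This is admissible because $s_1\le t$. Collecting all other admissible terms in $X$, you get $\alpha(|\Delta x(t)|)\le X+\gamma_c(Y)\le\hat\alpha(X+Y)$ with $\hat\alpha(s)=s+\gamma_c(s)$. Hence $\bar\alpha:=\hat\alpha^{-1}\circ\alpha\in\mathcal{K}_\infty$ yields (\ref{eq:sbiIOSS}) with $\bar\eta=\eta$. The case $t<s_1$ is treated the same way using $\alpha\le\alpha_x$.

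Two minor slips do not affect the argument. First, your chain $e^{-\eta(t-s_{k+1})}\le e^{-\eta(t-\tau)}\le c\,e^{-\eta(t-s_{k+1})}$ is reversed. For $\tau\le s_{k+1}$ the correct chain is $e^{-\eta(t-\tau)}\le e^{-\eta(t-s_{k+1})}\le e^{2\eta(T+d_{\max})}e^{-\eta(t-\tau)}$. Second, a sampling instant $s_m$ lies in the three windows $[s_{m-2},s_m]$, $[s_{m-1},s_{m+1}]$ and $[s_m,s_{m+2}]$, so the constant should be $3$ rather than $2$.
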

\begin{pf}
Due to Assumption~\ref{ass:f} we have for all $t\in [0,t^*]$
\begin{align}
	{|}\Delta x(t){|}\leq  \Big(|\Delta \chi{|}+\int_0^{t}L\rho({|}\Delta w(\tau){|}) \ d\tau\Big)e^{Lt}.	
	\label{eq:t<tstar}	
\end{align}
This can be readily seen by suitably adapting  Theorem~1.41 and its proof in \cite{Mir23}. 
Hence, 
\begin{align}
		\begin{aligned}
			&	\alpha_x(	{|}\Delta x(t^*){|})\leq 	\alpha_x(2|\Delta \chi{|}e^{Lt^*})\\&+	\alpha_x\Big(2\int_0^{t^*}L\rho({|}\Delta w(\tau){|}) \ d\tau  \  e^{Lt^*}\Big).	
		\end{aligned}
	\label{eq:betarho}	
\end{align}
By the \iioss bound (\ref{eq:ioss}) we know that for all $t\geq t^*$
\begin{align}
	\begin{aligned}
		&	\alpha({|}\Delta x(t){|})\leq  \alpha_x({|} \Delta x(t^*)
		{|}) e^{-\eta(t-t^*)}	\label{eq:iiosststar}
		+\int_{t^*}^t\Big(\gamma_1({|}\Delta w(\tau){|})  \\&+\gamma_2({|}\Delta y(\tau){|})+\gamma_3({|}\Delta v(\tau){|}) \Big) e^{-\eta(t-\tau)} \ d\tau.
	\end{aligned}
\end{align}
Now fix some arbitrary $K_i \in K$ and split the interval $[t^*,t],t>t^*$ in $\vartheta\in \mathbb{I}_{>0}$ time intervals  $[k_j,k_{j+1}]$ such that (i) $t^*\leq k_{j+1}-k_j\leq 2t^*$, $j=1,\ldots, \vartheta$ with $k_1=t^*$ and $k_{\vartheta+1}=t$ and (ii) $k_j-{t^*} \in K_i$ for  $j=2,\ldots, \vartheta$. We can always find such $k_j$ for each $K_i\in K$ 
by selecting $t^*$ large enough (which can be done without loss of generality) and due to the difference between any two measurements being bounded by $d_{\mathrm{max}}$ (cf. Definition~\ref{def:K}).
Using (\ref{eq:suffcond}) we can derive the following bound for each of the $\vartheta$ time intervals
\begin{align}
		&e^{-\eta(t-k_{j+1})}\int_{k_j}^{k_{j+1}} \gamma_2( {|}\Delta y(\tau){|})+\gamma_3( {|}\Delta v(\tau){|})  \ d\tau \nonumber\\ &\leq e^{-\eta(t-k_{j+1})}\Big(\int_{k_j-t^*}^{k_{j+1}}\alpha_w({|} \Delta w(\tau)
		{|})e^{-\eta(k_{j+1}-\tau)}  e^{\eta 3t^*} \ d\tau  \nonumber \\&+\hspace{-0.3em}\sum_{\substack{\tau \in [k_j-t^*,k_{j+1}] \\ \cap K_i}} \hspace{-0.6em}\big(\alpha_y( {|}\Delta y(\tau){|})+\alpha_v( {|}\Delta v(\tau)|)\big)e^{-\eta(k_{j+1}-\tau)}  e^{\eta 3t^*}\Big) \nonumber\\
		&	\leq  e^{\eta 3t^*} \int_{k_j-t^*}^{k_{j+1}}\alpha_w({|} \Delta w(\tau)
		{|})e^{-\eta(t-\tau)} \ d\tau \label{eq:bound_nu_interval}\\
		&+  e^{\eta 3t^*}
		\sum_{\substack{\tau \in [k_j-t^*,k_{j+1}] \\ \cap K_i}} \hspace{-0.6em}\big(\alpha_y( {|}\Delta y(\tau){|})+\alpha_v( {|}\Delta v(\tau)|)\big)e^{-\eta(t-\tau)}.  \nonumber
\end{align}
Therefore, and due to 
\begin{align} 
		\begin{aligned}
			&\int_{k_j}^{k_{j+1}} \big(\gamma_2( {|}\Delta y(\tau){|})+\gamma_3( {|}\Delta v(\tau){|}) \big)e^{-\eta(t-\tau)} \ d\tau \\\leq& e^{-\eta(t-k_{j+1})}\int_{k_j}^{k_{j+1}}  \big(\gamma_2( {|}\Delta y(\tau){|})+\gamma_3( {|}\Delta v(\tau){|}) \big) \ d\tau
		\end{aligned}
	\label{eq:interval_each}
\end{align}
we obtain for all $t\geq t^*$
\begin{align}
	&	\int_{t^*}^{t} \big(\gamma_2( {|}\Delta y(\tau){|})+\gamma_3( {|}\Delta v(\tau){|}) \big)e^{-\eta(t-\tau)}  d\tau 	\nonumber	
	\\& \leq 2e^{\eta 3t^*}\int_{0}^{t}\alpha_w({|} \Delta w(\tau)
	|)e^{-\eta(t-\tau)} \ d\tau  \label{eq:tdcond}
	\\&+ 2e^{\eta 3t^*} \sum_{\tau \in [0,t] \cap K_i}\big(\alpha_y( {|}\Delta y(\tau){|})+\alpha_v( {|}\Delta v(\tau)|)\big)e^{-\eta(t-\tau)}. \nonumber
\end{align} 
Note that when using (\ref{eq:bound_nu_interval}) and (\ref{eq:interval_each}) to obtain an upper bound for the whole time interval $[t^*,t]$ in (\ref{eq:tdcond}), the integration bounds on the right-hand side of (\ref{eq:bound_nu_interval}) overlap at most by $t^*$ (since $t^*\leq k_{j+1}-k_j$). This results in the factor $2$ in (\ref{eq:tdcond}).
Substituting (\ref{eq:betarho}) and  (\ref{eq:tdcond}) in (\ref{eq:iiosststar}) yields for all $t\geq t^*$
\begin{align}
\hspace{-1em}
	&\alpha({|}\Delta x(t){|})\leq \gamma_x({|}\Delta \chi{|})e^{-\eta t}+\int_{t^*}^t\gamma_1({|} \Delta w(\tau)
	{|})e^{-\eta(t-\tau)} \ d\tau \nonumber
	\\&+\alpha_x\Big(\int_0^{t^*} 2Le^{Lt^*}\rho({|}\Delta w(\tau){|})  \ d\tau\Big)e^{-\eta(t-t^*)}  \nonumber
	\\&+\int_0^t 2e^{\eta 3 t^*}\alpha_w({|} \Delta w(\tau)
	{|})e^{-\eta(t-\tau)} \ d\tau 	\label{eq:combbound}
	\\&+ \sum_{\tau \in [0,t] \cap K_i}2e^{\eta 3 t^*}\big(\alpha_y( {|}\Delta y(\tau){|})+\alpha_v( {|}\Delta v(\tau)|)\big)e^{-\eta(t-\tau)} \nonumber
\end{align}
with $\gamma_x(s)\coloneq\alpha_x(2e^{Lt^*} s)e^{\eta t^*}$. 
Note that we can upper bound $\alpha_x(s)\leq\gamma_c(\gamma_v(s))$ with  $\gamma_c$ and $\gamma_v$ being a concave and convex $\mathcal{K}_{\infty}$-function, respectively (cf. \cite[Lemma~13]{Kel14}). Applying Jensen’s inequality to  $\gamma_v$ 
and exploiting concavity of  $\gamma_c$	together with $e^{-\eta(t-t^*)}\in[0,1]$ for $t\geq t^*$
we can upper bound the third term on the right-hand side of (\ref{eq:combbound}) by
\begin{align}
\begin{aligned}
	&	\alpha_{x}\Big(\int_0^{t^*} 2Le^{Lt^*}\rho({|}\Delta w(\tau){|}) \ d\tau\Big)e^{-\eta(t-t^*)} \\ \leq &\gamma_c\Big(\frac{1}{t^*}\int_0^{t^*}\gamma_v(t^*2Le^{Lt^*} \rho({|}\Delta w(\tau){|}))e^{\eta\tau} d \tau\Big)e^{-\eta(t-t^*)} \\\leq& \gamma_c\Big(\frac{1}{t^*}\int_0^{t^*}\gamma_v( t^*2Le^{Lt^*}\rho({|}\Delta w(\tau){|}))e^{\eta  t^*}e^{-\eta(t-\tau)} d \tau\Big)
\end{aligned}
\label{eq:combbound_3term}
\end{align}
Note that the term $e^{\eta \tau}$ could be added in the second line since it is always greater than or equal to one.
Furthermore, note that $\alpha_x(s)\geq \alpha(s), \  \forall s\in \mathbb{R}_{\geq 0}$ (which follows from (\ref{eq:ioss}) with $t=0$) and that $e^{-\eta(t-t^*)}\geq 1$ for $t\in[0,t^*]$. Thus, using (\ref{eq:t<tstar}) we can write 
\begin{align}
	\label{eq:alphat<tstar}	
\begin{aligned}
	&	\alpha(	{|}\Delta x(t){|})\leq 	\alpha_x(2|\Delta \chi{|}e^{Lt^*})e^{-\eta(t-t^*)}\\&+	\alpha_x\Big(2\int_0^{t}L\rho({|}\Delta w(\tau){|}) \ d\tau  \  e^{Lt^*}\Big).	
\end{aligned}
\end{align}
Hence, it can be seen that substituting (\ref{eq:combbound_3term}) in (\ref{eq:combbound}) not only provides an upper bound for $t\geq t^*$ but also upper bounds  (\ref{eq:alphat<tstar})	for $t\leq t^*$. Therefore, 
we have  an upper bound that holds for all $t\geq 0$. 
This we can further bound as follows 
\begin{align*}
\begin{aligned}
	&\alpha({|}\Delta x(t){|})\leq \hat\alpha\Big(\gamma_x({|}\Delta \chi{|})e^{-\eta t}
	+\int_{t^*}^t\gamma_1(| \Delta w(\tau)
	|)e^{-\eta(t-\tau)} \ d\tau\\&+\frac{1}{t^*}\int_0^{t^*}\gamma_v(t^* 2Le^{Lt^*}\rho({|}\Delta w(\tau){|}))e^{\eta  t^*} e^{-\eta(t-\tau)} \ d\tau\\&+\int_0^t 2e^{\eta 3 t^*}\alpha_w({|} \Delta w(\tau)
	{|})e^{-\eta(t-\tau)} \ d\tau\\&+ \sum_{\tau \in [0,t] \cap K_i}2e^{\eta 3 t^*}\big(\alpha_y( {|}\Delta y(\tau){|})+\alpha_v( {|}\Delta v(\tau)|)\big)e^{-\eta(t-\tau)}\Big)
\end{aligned}
\end{align*}
with $\hat\alpha(s)=s+\gamma_c(s)$. Then we finally obtain
\begin{align*}
\begin{aligned}
&	\bar{\alpha}({|}\Delta x(t){|})\leq  \bar{\alpha}_x({|}\Delta \chi{|})e^{-\bar\eta t}+\int_0^{t}\bar{\gamma}_1({|}\Delta w(\tau){|}) e^{-\bar\eta(t-\tau)} \ d\tau\\&+ \sum_{\tau \in [0,t] \cap K_i}\big(\bar{\gamma}_2( |\Delta y(\tau)|)+\bar{\gamma}_3( |\Delta v(\tau)|)\big)e^{-\bar\eta(t-\tau)}, \quad \forall t\geq 0
\end{aligned}
\end{align*}
with $\bar{\alpha}(s)=\hat{\alpha}^{-1}(\alpha(s))$ , $\bar{\alpha}_x(s)=\gamma_x(s)$, $\bar{\gamma}_1(s)=\gamma_1(r)+2e^{\eta 3 t^*}\alpha_w(s)+\gamma_v(t^*2Le^{Lt^*}\rho(s))\frac{e^{\eta  t^*}}{t^*}$, $\bar\gamma_2(s)=2e^{\eta 3 t^*}\alpha_y(s)$, $\bar\gamma_3(s)=2e^{\eta 3 t^*}\alpha_v(s)$ and $\bar{\eta}=\eta$, i.e., system (\ref{eq:sys}) is sample-based
\iioss according to Definition~\ref{def:sbIOSS}. 
$\qed$	
\end{pf}
\section{Sample-based MHE}
\label{sec:sbMHE}
MHE  is an optimization-based state estimation technique. At each time step, an optimization problem is solved over a sliding window, with the past inputs and outputs being taken into account, to obtain the current state estimate. Following a similar approach as for discrete-time systems in \cite{Kra25d},  we propose a robustly stable sample-based MHE that considers an irregular measurement sequence. 
\subsection{Setup}
\label{sec:MHE_setting}
We define $K_s$ as the sequence  containing the time instances where a measurement is available to the estimator.
In the proposed scheme the optimization problem is not explicitly solved at regular (equidistant) sampling instances, but only when new information is available. 
This means that, for $t_i \in K_s$, an optimization problem is solved at time $t_i$ with the horizon length $M_{t_{i}}\coloneq\min\{t_i,M\}$, $M \in \mathbb{I}_{\geq0}$. \color{black} The past input and output trajectories considered in the moving window $[t_i-M_{t_i},t_i]$ are denoted by $u_{t_i}$ and $y_{t_i}$, i.e., 
\begin{align*}
	\begin{aligned}
		u_{t_i}(\tau)\coloneq u(t_i-M_{t_i}+\tau), \ \tau \in [0,M_{t_i}]\\
		y_{t_i}(\tau)\coloneq y(t_i-M_{t_i}+\tau, \chi, u,w,v), \ \tau \in [0,M_{t_i}].
	\end{aligned}
\end{align*}
\color{black}
To obtain the current state estimate, the following NLP is solved for $t_i \in K_s$
%
%
\begin{align}
\begin{aligned}
\min_{\hat{\chi}_{t_i}, \hat{w}_{t_i},\hat{v}_{t_i}}
&	J(\hat{\chi}_{t_i}, \hat{w}_{t_i}, \hat{v}_{t_i},  \hat{y}_{t_i},t_i)\\
\mathrm{s.t.} \  \hat{x}_{t_i}(\tau)&=x(\tau,\hat{\chi}_{t_i},u_{t_i},\hat{w}_{t_i}), \ \tau\in [0,M_{t_i}], 
\\	\hat{y}_{t_i}(\tau)&=y(\tau,\hat{\chi}_{t_i},u_{t_i},\hat{w}_{t_i},\hat{v}_{t_i}), \ \hat{y}_{t_i}(\tau)  \in \mathcal{Y}, \ \tau\in [0,M_{t_i}],\\ 
\hat{w}_{t_i}(\tau) &\in \mathcal{W},\ \hat{v}_{t_i}(\tau) \in \mathcal{V}, \ \hat{x}_{t_i}(\tau)  \in \mathcal{X}, \ \tau\in [0,M_{t_i}]
\end{aligned}
\label{eq:NLP}
\end{align}
where $ \hat{w}_{t_i},\hat{v}_{t_i},\hat{y}_{t_i}$  refer to the segments of the  estimated disturbance and output trajectories, respectively, on the time interval $[t_i-M_{t_i},t_i]$. Furthermore, $\hat{\chi}_{t_i}$ denotes the initial estimate for the interval $[t_i-M_{t_i},t_i]$, estimated at time $t_i$.  We denote the optimal solution minimizing  (\ref{eq:NLP}) by $(\chi_{t_i}^*, \hat{w}_{t_i}^*,\hat{v}_{t_i}^*)$ and the corresponding optimal state trajectory by $\hat{x}_{t_i}^*(\tau)=x(\tau,\hat{\chi}_{t_i}^*,u_{t_i},\hat{w}_{t_i}^*)$.  We define $\hat{x}(t_i)\coloneq \hat{x}_{t_i}^*(M_{t_i})$ and the state estimation error is given by $e(t_i)\coloneq x(t)-\hat{x}(t)$.
We consider the following cost function 
\begin{align}
\begin{aligned}
&J(\hat{\chi}_{t_i}, \hat{w}_{t_i},   \hat{y}_{t_i},t_i)=2e^{-\eta M_{t_i}}{|}\hat{\chi}_{t_i}-\hat{x}(t_i-M_{t_i}){|}^2_{P_2} \\&+\int_{0}^{M_{t_i}} e^{-\eta(M_{t_i}-\tau)}2 {|}\hat{w}_{t_i}(\tau){|}_{Q_w}^2 \ d\tau
\\&+\sum_{\tau+t-M_{t_i}\in [t_i-M_{t_i},t_i] \cap K_s} e^{-\eta(M_{t_i}-\tau)}2{|}\hat{v}_{t_i}(\tau){|}_{Q_v}^2	
\\&	+\sum_{\tau+t-M_{t_i}\in [t_i-M_{t_i},t_i] \cap K_s} e^{-\eta(M_{t_i}-\tau)} {|}\hat{y}_{t_i}(\tau)-y_{t_i}(\tau){|}_R^2
\end{aligned}
	\label{eq:objfunc}
\end{align}
with $\eta \in\mathbb{R}_{>0}$. 
The first part of the cost function penalizes the difference between the first element of the estimated state sequence ($\hat{\chi}_{t_i}$) and the prior estimate using the weighting matrix $P_2\succ 0$. The remaining three terms consider the estimated disturbance sequences and the difference between the measured and estimated outputs at the time instances where a measurement is available to the state estimator, weighted by $Q_w\succ 0$, $Q_v\succ 0$ and $R\succ 0$, respectively. 
\color{rev}
Further discussion on the parameterization of the cost function can be found in Remark~\ref{rem:para} below.
\color{black}
\par At $t_i \in K_s$, we obtain $\hat{x}(t_i)$ by solving the MHE's optimization problem. The next measurement is available at $t_{i+1}$ and thus $\hat{x}(t_{i+1})$ is computed again by solving (\ref{eq:NLP}). For $t_i<t<t_{i+1}$ the estimator runs in open-loop, i.e., 
\begin{align*}
\hat{x}(\tau)&=x(\tau,\hat{x}(t_i),u,0), \ \tau\in (t_i,t_{i+1}).
\end{align*}
\subsection{Robust stability analysis}
In the following we show robust stability of the proposed sample-based MHE scheme according to the following definition.
This definition is adapted to our continuous-time setting with discrete measurements and is thus formulated using both and integral and discrete summation terms.
\begin{defn}
\label{def:rges}
Consider a sequence of measurement instances $K_s$. A state estimator for system (\ref{eq:sys}) starting at initial condition $\hat\chi$ and producing the state estimate $\hat{x}(t)$ is robustly globally exponentially stable 
(RGES) if there exist some 
$C_x,C_w,C_v,r\in \mathbb{R}_{> 0}$ and $\lambda \in \mathbb{R}_{\geq 0}$  such that for any initial conditions $\chi,\hat{\chi} \in \mathcal{X}$, any disturbance sequences~$w\in\mathcal{M_W}$, $v\in\mathcal{M_V}$  and any control input $u\in \mathcal{M_U}$ the following holds for all $t\geq 0$
\begin{align*}
\begin{aligned}
{|}x(t)-\hat{x}(t){|}^r&\leq C_x{|}\chi-\hat\chi{|}^r e^{-\lambda t}+\int_{0}^{t}
C_w{|}w(\tau){|}^re^{-\lambda(t-\tau)} \ d\tau\\&+\sum_{\tau \in [0,t] \cap K_s}
C_v{|}v(\tau){|}^re^{-\lambda(t-\tau)}.
\end{aligned}
\end{align*} 
\end{defn}
 In order to establish robust exponential stability of the estimation error, we consider an exponential version  of sample-based \iioss \color{rev} with quadratic weighting \color{black} as our detectability assumption. 
\begin{assum}[sample-based exponential \iioss]
\label{ass:eIOSS}
	Consider a set $K$ as in Definition~\ref{def:K}. 
There exist  \allowbreak $P_1,\allowbreak P_2,\allowbreak Q_w,\allowbreak  Q_v,\allowbreak R \succ 0$ and $\eta\in \mathbb{R}_{> 0}$ such that for all $u \in\mathcal{M_U}$, any pairs of disturbance trajectories  $w_1,w_2 \in\mathcal{M_W}$, $v_1,v_2 \in\mathcal{M_V}$ and any pair of initial conditions  $\chi_1$, $\chi_2$ it holds for all $t\geq 0$ and any $K_i\in K$
\begin{align}
\begin{aligned}
{|}\Delta x(t){|}_{P_1}^2&\leq {|}\Delta \chi{|}_{P_2}^2 e^{-\eta t} +\int_{0}^{t}{|}\Delta w(\tau){|}_{Q_w}^2 e^{-\eta(t-\tau)} \ d\tau
\\&+\sum_{\tau \in [0,t] \cap K_i}\big({|}\Delta y(\tau){|}_R^2+|\Delta v(\tau){|}_{Q_v}^2\big)e^{-\eta(t-\tau)}.
\end{aligned}
	\label{eq:ass1}
\end{align}
\end{assum}
\color{rev}
\begin{rem}
\label{rem:para}
Note that the weighting matrices $P_2,Q_w,Q_v$ and $R$ and the parameter $\eta$ in the MHE's NLP (\ref{eq:NLP})-(\ref{eq:objfunc}) correspond to parameters in Assumption~\ref{ass:eIOSS}.
 This is leveraged to ensure RGES in Theorem~\ref{thm:stab} below. However, assuming the parameterization of the cost function corresponding to the sample-based detectability condition  does not restrict any tuning possibilities. If the system is sample-based exponentially \iioss,
	then the cost function can be parameterized arbitrarily using any positive definite matrices $P_2,Q_w,Q_v$ and $R$
	since (\ref{eq:ass1}) can be rescaled accordingly, analogous to the non-sample-based case in continuous time and discrete time. However, the choice of $P_2$ influences the minimum horizon length required for the estimator to be RGES. See the discussions in \cite[Remark 3]{Sch24} and \cite[Remark 1]{Sch23}. 
\end{rem}
\color{black}
Under this detectability assumption, we can show that, with a sufficiently large horizon length, the estimation error is guaranteed to be robustly stable. \color{rev} Note that the theoretical analysis largely follows from the discrete-time case in \cite{Kra25d}. This is a consequence of our sample-based design, which allows arguments from the sample-based discrete-time setting to be reused. 
\color{black}
\begin{thm}[sample-based MHE is RGES]
	\label{thm:stab}
Consider a set $K$ as in Definition~\ref{def:K}.  
Let Assumption \ref{ass:eIOSS} hold, assume the measurement sequence $K_s \in K$,  and let the horizon $M\in \mathbb{I}_{\geq d_{\max}}$ be chosen such that $4\lambda^2_{\mathrm{max}}(P_2,P_1)e^{-\eta M}<1$. Then, there exists $\tilde\eta\in \mathbb{R}_{>0}$ such that  the state estimation error of the sample-based  MHE
scheme (\ref{eq:NLP}) satisfies for all $t\geq 0$
\begin{align}
&{|}\hat{e}(t){|}_{P_1}^2\leq  4 \lambda_{\mathrm{max}}(P_2,P_1)\Big(e^{-\tilde\eta t}{|}\hat{e}(0){|}_{P_2}^2
\label{eq:erbound}
\\&+\int_{0}^{t} e^{-\tilde\eta (t-\tau)}  {|}w(\tau){|}_{Q_w}^2 \ d\tau+\sum_{\tau \in [0,t] \cap K_s}
e^{-\tilde\eta (t-\tau)} 2 {|}v(\tau){|}_{Q_v}^2\Big).\nonumber
\end{align}
\end{thm}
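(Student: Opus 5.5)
We follow the standard MHE stability argument (as in \cite{Sch23,Sch24,Kra25d}), working only at the measurement instants $t_i\in K_s$ and then filling in the open-loop intervals. The plan has four steps.

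\textbf{Step 1 (detectability applied to the window).} Fix $t_i\in K_s$. Apply Assumption~\ref{ass:eIOSS} on the window $[t_i-M_{t_i},t_i]$ with the true trajectory $(x(t_i-M_{t_i}),w,v)$ and the optimal estimated trajectory $(\hat\chi^*_{t_i},\hat w^*_{t_i},\hat v^*_{t_i})$. Both are driven by the same input $u_{t_i}$, so this is admissible. Since $K_s\in K$ and the sets in $K$ are all shifts of the same difference sequence, the measurement instants inside the window again form an admissible sampling set. This is where $M\ge d_{\max}$ and the uniformity over $K_i$ are used, and I expect this bookkeeping to need some care.

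\textbf{Step 2 (bounding the right-hand side by the cost).} We obtain
\begin{align*}
|\hat e(t_i)|_{P_1}^2\le{}& e^{-\eta M_{t_i}}|x(t_i-M_{t_i})-\hat\chi^*_{t_i}|_{P_2}^2+\int e^{-\eta(\cdot)}|w-\hat w^*|_{Q_w}^2\,d\tau\\
&+\sum e^{-\eta(\cdot)}\big(|y-\hat y^*|_R^2+|v-\hat v^*|_{Q_v}^2\big).
\end{align*}
Apply $|a-b|^2\le 2|a|^2+2|b|^2$ to the state, disturbance and noise differences. For the state term, split through the prior $\hat x(t_i-M_{t_i})$. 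Then bound the estimated quantities by $J(\hat\chi^*,\ldots)$, and use optimality to bound $J(\hat\chi^*,\ldots)\le J(x(t_i-M_{t_i}),w,v)$, the cost of the true trajectory, which is feasible. This yields
\begin{align*}
|\hat e(t_i)|_{P_1}^2\le{}& 4e^{-\eta M_{t_i}}|\hat e(t_i-M_{t_i})|_{P_2}^2+4\int_{\text{window}} e^{-\eta(t_i-\tau)}|w|_{Q_w}^2\,d\tau\\
&+8\sum_{\text{window}\cap K_s} e^{-\eta(t_i-\tau)}|v|_{Q_v}^2 .
\end{align*}
The constants may differ slightly from those in the statement; the statement's factors suggest grouping them as $4\lambda_{\max}(P_2,P_1)$ times the bracket.

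\textbf{Step 3 (closing the recursion).} Convert $|\hat e(t_i-M_{t_i})|_{P_2}^2\le\lambda_{\max}(P_2,P_1)|\hat e(t_i-M_{t_i})|_{P_1}^2$. This gives a contraction factor $4\lambda_{\max}(P_2,P_1)e^{-\eta M}<1$ over one horizon length. The main obstacle is that $t_i-M$ is generally not a measurement instant, so the prior there is an open-loop prediction, and we have no bound for $|\hat e|$ at arbitrary times.

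To handle this, first apply the exponential \iioss bound (Assumption~\ref{ass:eIOSS} with no samples in between, i.e.\ over an interval shorter than $d_{\max}$) to the open-loop segment from the last measurement time $t_l\le t_i-M$. Equivalently, apply Assumption~\ref{ass:eIOSS} directly between $t_l$ and $t$. This bounds $|\hat e(t)|_{P_1}^2$ on $(t_l,t_{l+1})$ by $\lambda_{\max}(P_2,P_1)|\hat e(t_l)|_{P_1}^2 e^{-\eta(t-t_l)}$ plus a $w$-integral, since the estimator uses $w=0$ and there are no output terms. This is likely the source of the extra $\lambda_{\max}(P_2,P_1)$ factor.

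Chaining these bounds over successive windows of length at least $M$ gives a geometric decay with rate $\tilde\eta$, defined through $e^{-\tilde\eta M}=4\lambda^2_{\max}(P_2,P_1)e^{-\eta M}$ (with $\tilde\eta\le\eta$ and shrunk slightly if needed). The disturbance contributions sum up with discount $e^{-\tilde\eta(t-\tau)}$. The overlaps are harmless because the windows are chained backwards without overlap.

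\textbf{Step 4 (initial phase and intermediate times).} Treat the initial phase $t_i<M$ (where $M_{t_i}=t_i$ and the prior is $\hat\chi$) directly from Step 2. Finally, extend the bound from the $t_i$ to all $t\ge0$ via the open-loop bound in Step 3, absorbing the bounded factors $e^{\eta d_{\max}}$ into the rate and constants.
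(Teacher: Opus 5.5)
Your proposal has a genuine gap in Step 1, and it propagates into Steps 3 and 4.

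The window $[t_i-M,t_i]$ generally does not start at a sampling instant. Re-referenced to $t_i-M$, the samples it contains are generally not of the form $K_j\cap[0,M]$ for any $K_j\in K$, because the offset $\zeta\in[0,d_{\max})$ to the first sample is not one of the gaps $d_j$. For example, take $\mathcal{D}=\{3,3,\ldots\}$, $M=4$, $t_i=9$: the window holds samples at relative times $1$ and $4$, whereas $K_j\cap[0,4]=\{3\}$ for every $j$. Using \eqref{eq:ass1} with some $K_j$ would create output errors at unmeasured instants, which the cost does not control.

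What works, and what the paper does, is to split the window at the first sample $t_i-M+\zeta$:
\begin{itemize}
\item on the short piece, apply \eqref{eq:ass1} with a $K_j$ whose first element is $d_{\max}>\zeta$, so there are no output terms;
\item on the remainder, apply it with the shifted sequence, which lies in $K$.
\end{itemize}
Gluing the two pieces costs one factor $\lambda:=\lambda_{\mathrm{max}}(P_2,P_1)$. So your Step 2 bound really reads $|\hat e(t_i)|_{P_1}^2\le 4\lambda e^{-\eta M}|\hat e(t_i-M)|_{P_2}^2+\dots$. This split, not the open-loop segment, is the source of the prefactor $4\lambda$ in \eqref{eq:erbound}.

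With this factor included, your Step 3 no longer closes:
\begin{itemize}
\item converting the prior at $t_i-M$ from $P_2$ to $P_1$ costs a second $\lambda$;
\item your separate open-loop application of \eqref{eq:ass1} on $[t_l,t_i-M]$, followed by $|\hat e(t_l)|_{P_2}^2\le\lambda|\hat e(t_l)|_{P_1}^2$, costs a third.
\end{itemize}
One cycle then contracts only by $4\lambda^3e^{-\eta(M+\delta)}$, where $\delta=t_i-M-t_l$ can be arbitrarily small. For $\lambda>1$, the hypothesis $4\lambda^2e^{-\eta M}<1$ does not imply $4\lambda^3e^{-\eta M}<1$. Likewise, your Step 4 extension to non-sampling times via the open-loop bound would turn the leading constant $4\lambda$ into $4\lambda^2$.

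The missing idea is that the sample-based scheme is equivalent to an MHE with time-varying horizon $M_t=M+\delta_t$, where $\delta_t$ is the time elapsed since the last sample. For every $t$, the optimal solution of this scheme is the one computed at the last sampling instant, extended by $\hat w=0$ on the sample-free tail. This holds because the tail adds only $\hat w$-cost, and all other discount weights are scaled by the common factor $e^{-\eta\delta_t}$. Hence, at \emph{every} time, $\hat x(t)$ is the endpoint of an optimal trajectory on $[t-M_t,t]$ with prior $\hat x(t-M_t)$.

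The split-window argument plus optimality then gives, for all $t\ge M$,
$|\hat e(t)|_{P_1}^2\le 4\lambda^2e^{-\eta M_t}|\hat e(t-M_t)|_{P_1}^2+4\lambda(\dots)$.
Here the open-loop tail is covered by the same application of \eqref{eq:ass1}, not a separate one. You can then chain backwards from an arbitrary $t$ directly through non-sampled priors, $k_{i+1}=k_i-M-\delta_{k_i}$. This yields \eqref{eq:erbound} with the stated constants and needs no separate treatment of intermediate times.

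Two minor points:
\begin{itemize}
\item With the weight $2$ in the cost, the $v$-coefficient in your Step 2 is $4$, not $8$.
\item Consecutive windows share their endpoints, so a sample at a junction enters two sums. The windows are therefore not ``chained without overlap''. This double counting is exactly the factor $2$ on $|v|_{Q_v}^2$ in \eqref{eq:erbound}.
\end{itemize}
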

\begin{pf}
The MHE scheme presented in Section~\ref{sec:MHE_setting} operates by using a fixed horizon length $M$ whenever new measurements become available and otherwise performing an open-loop prediction. Similar to the discrete-time case in \cite{Kra25d}, it can be shown that this is equivalent to a scheme with varying horizon length (cf. \cite[Proposition 1]{Kra25d}).
Thus, the stability proof in \cite[Theorem 1]{Kra25d} can  be straightforwardly adapted to the continuous-time setting to derive (\ref{eq:erbound}). The detailed proof can be found in the appendix.  
$\qed$
\end{pf}
\section{Linear systems and sample-based \iioss}
\label{sec:linSys}
In the following, we will analyze the connections between sample-based observability and sample-based \iioss for linear systems.     As it will be shown in this section,  in the linear case, sample-based exponential \iioss of the system follows from sample-based observability  of the unstable subsystem together with some additional condition on the sampling instances. \color{rev} This condition is required due to the continuous-time setting, which requires additional analysis beyond a straightforward extension of the discrete-time case.
\color{black} We discuss a sufficient condition for this additional condition to hold later in this section (see Theorem~\ref{thm:addcondsat} and the remarks below it).  
Therefore, previously established conditions for linear systems to be sample-based observable (cf. \cite{Wan11,Zen16}) can be utilized to verify or design sampling strategies that satisfy Assumption~\ref{ass:eIOSS} to ensure RGES of the sample-based MHE.
Consider a linear time-invariant system described by
\begin{align}
\begin{aligned}
\dot{x}(t)&=Ax(t)+B u(t)+w(t),\\
y(t)&=C x(t) +D u(t)
\end{aligned}
\label{eq:syslin}
\end{align}
where $A \in \mathbb{R}^{n \times n}$, $B \in \mathbb{R}^{n \times m}$, $C \in \mathbb{R}^{p \times n}$, $D \in \mathbb{R}^{p \times m}$. 
All other assumptions and notation are as in the general nonlinear setup (Section~\ref{sec:Pre}).
To facilitate establishing the results presented later in this section, measurement noise is omitted. This assumption is made without loss of generality, since (sample-based) \iioss  of system (\ref{eq:syslin}) is equivalent to (sample-based) \iioss of the same system with measurement noise in the output equation, i.e., 	$y(t)=C x(t) +D u(t) +v(t)$. 
\par In the non-sampled case it is well know that \ioss and detectability are equivalent for linear systems \cite{Son97}. For the sake of completeness, we briefly show this for the  integral variant of \ioss that we consider in this paper.
\begin{thm}
\label{thm:detect}
A linear system is detectable if and only if it is
time-discounted exponentially \iioss according to Definition~\ref{def:iiioss}.
\end{thm}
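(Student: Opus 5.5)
We prove the two directions separately for system (\ref{eq:syslin}), where detectability means that every unobservable mode of $(A,C)$ is asymptotically stable, i.e., there exists $L$ with $A-LC$ Hurwitz.

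\emph{Sufficiency (detectable $\Rightarrow$ exponential \iioss).} Choose $L$ with $A-LC$ Hurwitz. Let $\Delta x$ be the difference of two trajectories driven by the same $u$. Its dynamics can be written as an observer-type system,
\begin{align*}
\dot{\Delta x}=(A-LC)\Delta x+L\,\Delta y+\Delta w ,
\end{align*}
where the terms $Du$ cancel in $\Delta y$. Since $A-LC$ is Hurwitz, we pick $P\succ 0$ solving $(A-LC)^\top P+P(A-LC)=-2\mu P-I$ for some $\mu>0$; one first takes the Lyapunov solution for $A-LC+\mu I$ with $\mu$ small. Take $V=|\Delta x|_P^2$. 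Along solutions,
\begin{align*}
\dot V\leq -2\mu V-|\Delta x|^2+2\Delta x^\top P(L\Delta y+\Delta w).
\end{align*}
Young's inequality absorbs the cross terms into $|\Delta x|^2$ and yields
\begin{align*}
\dot V\leq -2\mu V+c_1|\Delta w|^2+c_2|\Delta y|^2 .
\end{align*}
The comparison lemma, i.e., variation of constants applied to the scalar differential inequality, then gives
\begin{align*}
V(t)\leq V(0)e^{-2\mu t}+\int_0^t(c_1|\Delta w|^2+c_2|\Delta y|^2)e^{-2\mu(t-\tau)}\,d\tau .
\end{align*}
Bounding $V$ from both sides by $\lambda_{\min}(P)$ and $\lambda_{\max}(P)$ gives (\ref{eq:ioss}) with $r=2$ and $\eta=2\mu$, hence exponential \iioss. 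Measurement noise would add a $\gamma_3$ term in the same way if included.

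\emph{Necessity (exponential \iioss $\Rightarrow$ detectable).} Argue by contradiction. Suppose $(A,C)$ is not detectable. By the PBH test there are $\lambda\in\mathbb{C}$ with $\mathrm{Re}\,\lambda\geq 0$ and $z\neq 0$ such that $Az=\lambda z$ and $Cz=0$. Take $\chi_1=\mathrm{Re}(z)$ (or $\mathrm{Im}(z)$, whichever is nonzero), $\chi_2=0$, $w_1=w_2$ and the same $u$. Then $\Delta x(t)=\mathrm{Re}(e^{\lambda t}z)$, up to the choice of real or imaginary part, lies in the real unobservable invariant subspace spanned by $\mathrm{Re}\,z$ and $\mathrm{Im}\,z$, so $\Delta y\equiv 0$. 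All integral terms in (\ref{eq:ioss}) vanish and we obtain $\alpha(|\Delta x(t)|)\leq\alpha_x(|\Delta\chi|)e^{-\eta t}\to 0$, so $\Delta x(t)\to 0$.

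This step is the main obstacle. For real $\lambda$ the contradiction is immediate, since $|\Delta x(t)|=e^{\lambda t}|z|\geq|z|$. For complex $\lambda$, $\mathrm{Re}(e^{\lambda t}z)$ oscillates, so we argue as follows. The function $e^{-\mathrm{Re}\lambda\, t}\mathrm{Re}(e^{\lambda t}z)$ is a nonzero periodic function of $t$, hence it does not tend to zero. Because $e^{\mathrm{Re}\lambda\, t}\geq 1$, the trajectory $\Delta x(t)$ therefore does not converge to zero either, which contradicts the bound above. Hence $(A,C)$ is detectable. The "time-discounted" form of the definition is used only in this bound, where the integral terms are zero.
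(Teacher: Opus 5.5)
Your proposal is correct and follows essentially the paper's route. For sufficiency, both use an observer gain $L$ with $A-LC$ Hurwitz, a quadratic Lyapunov function $|\Delta x|_P^2$ and Young's inequality. For necessity, both use the fact that zero-output incremental trajectories must decay; you make this rigorous with a PBH eigenvector, where the paper states it in one line.

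Your sufficiency argument is in fact tighter than the paper's write-up. You retain the decay term $-2\mu V$ and apply the comparison lemma, which gives the discounted bound with $\eta=2\mu>0$ that Definition~\ref{def:iiioss} requires. The paper instead drops $-c_x|\Delta x|^2$ and integrates only to the non-discounted estimate (\ref{eq:eIOSSsq}).
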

\begin{pf}
	Suppose a linear system is  exponentially \mbox{\iioss}.
	Then  $w, y \equiv 0$ implies $x \rightarrow 0$ as $t\rightarrow \infty$, thus the system is detectable.
	For the opposite direction we use the fact that for a detectable system there exists some matrix $L$ such that $A_L\coloneq A-LC$ is Hurwitz. 
\color{rev}
Hence, there exist $P=P^\top\succ0$ and $Q\succ0$, satisfying
$A_L^\top P + PA_L = -Q$.
Notice that $\Delta \dot{x}(t)=A\Delta x(t) +\Delta w(t) +L(\Delta y(t)-\Delta y(t))
=\Delta \dot{x}(t)=A_L \Delta x(t)+\Delta w(t) +L\Delta y(t)$. Defining the  Lyapunov function $V(\Delta x(t))=\Delta x(t)^\top P\Delta x(t)$, we obtain
\begin{align*}
	\dot V
	&=
	-\Delta x(t)^\top Q\Delta x(t)
	+2\Delta x(t)^\top PL\,\Delta y(t)
	+2\Delta x(t)^\top P\,\Delta w(t).
\end{align*}
Applying Young's inequality, we can derive
\begin{align*}
	\dot V &\leq -c_x|\Delta x(t)|^2 + c_w|\Delta w(t)|^2+ c_y|\Delta y(t)|^2 \\
	&\leq  c_w|\Delta w(t)|^2+ c_y|\Delta y(t)|^2 
	\end{align*}
	for some $c_x,c_w,c_y>0$. Integrating over $[0,t]$ yields
\begin{align*}
V(t)-V(0)
\leq
\int_0^t \big( c_w|\Delta w(\tau)|^2+c_y|\Delta y(\tau)|^2\big)\,d\tau.
\end{align*}
Hence, 
\begin{align}
 |\Delta x(t)|_P^2
\leq
|\Delta \chi|_P^2
+
\int_0^t \big( c_w|\Delta w(\tau)|^2+c_y|\Delta y(\tau)|^2\big)\,d\tau. 	\label{eq:eIOSSsq} 
\end{align} $\qed$\color{black}
%
\end{pf}
\noindent For establishing our results on the relation between sample-based \iioss and sample-based observability, we first need to state the following definition of the sample-based observability matrix. 
\begin{defn}[Sample-based observability matrix \cite{Kra25b}]
\label{def:SOm}
Consider a set of arbitrary time instances $\{ \tau_i \}_{i=1}^k$ for some $k\geq 1$, with $0 \leq \tau_1$ and $\tau_i < \tau_{i+1}$ for all $i=1,...,k-1$.
The sample-based observability matrix  is given by
\begin{align}
O_{\mathrm{s}}(A,C)=\begin{pmatrix}
(Ce^{A\tau_1})^\top&(Ce^{A\tau_2})^\top& \ldots&(Ce^{A\tau_k})^\top 
\end{pmatrix}^\top.
\label{eq:Osc}
\end{align}
\end{defn}
In \cite[Definition~2]{Kra25b}, it was discussed that if the sample-based observability matrix 	$O_{\mathrm{s}}(A,C)$ has full column rank, one can uniquely reconstruct the initial state from knowledge of the input sequence (in case of zero disturbances) and sampled  outputs, meaning that the system is sample-based observable.
\par 	Before stating the first main result of this section, we propose the following lemma, that will be used in the proof of Theorem~\ref{thm:lin1} below.
\begin{lem}
\label{lem:singval_min}
If $\mathrm{rank}(O_{\mathrm{s}}(A,C))=n$ with $\{ \tau_{i,t} \}_{i=1}^{k_t} \subseteq [t-T,t]$ and $\sigma_{\mathrm{min}}(O_{\mathrm{s}}(A,C)e^{-A(t-T)})\geq \bar{\sigma}$  for some $\bar\sigma >0$, then there exist some matrices $\mu_i\in\mathbb{R}^{p\times p}$, $i=1,\ldots,k_t$ and some $\bar{\mu}>0$ such that
\begin{align}Ce^{AT}=M O_{\mathrm{s}}(A,C)e^{-A(t-T)}\label{eq:lem}\end{align}  with $M=\begin{pmatrix}\mu_1 &  \mu_2 & \cdots & \mu_{k_t}\end{pmatrix} \in \mathbb{R}^{p \times k_tp}$ and ${|}\mu_i{|}\leq \bar{\mu}$, $i=1,\ldots, k_t$.
\end{lem}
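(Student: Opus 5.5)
The plan is to build $M$ from a left inverse of the shifted sample-based observability matrix, and to get the uniform bound $\bar\mu$ from the lower bound $\bar\sigma$ on its smallest singular value.

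Abbreviate $\tilde O_t \coloneq O_{\mathrm{s}}(A,C)e^{-A(t-T)}\in\mathbb{R}^{k_tp\times n}$. Its $i$-th block row is $Ce^{A(\tau_{i,t}-(t-T))}$, with $\tau_{i,t}-(t-T)\in[0,T]$. Since $e^{-A(t-T)}$ is invertible and $\mathrm{rank}(O_{\mathrm{s}}(A,C))=n$, the matrix $\tilde O_t$ also has full column rank $n$. Hence its Moore--Penrose pseudoinverse $\tilde O_t^{\dagger}=(\tilde O_t^\top\tilde O_t)^{-1}\tilde O_t^\top$ is a left inverse, $\tilde O_t^{\dagger}\tilde O_t=I_n$.

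Define
\begin{align*}
M\coloneq Ce^{AT}\tilde O_t^{\dagger}\in\mathbb{R}^{p\times k_tp}.
\end{align*}
Then $M\tilde O_t=Ce^{AT}$, which is exactly (\ref{eq:lem}). Partition $M$ into $k_t$ blocks $\mu_i\in\mathbb{R}^{p\times p}$ of $p$ columns each, matching the block rows of $\tilde O_t$.

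For the bound, recall that the spectral norm of the pseudoinverse equals $1/\sigma_{\mathrm{min}}(\tilde O_t)$, so $|\tilde O_t^{\dagger}|\leq 1/\bar\sigma$. Each block satisfies $|\mu_i|\leq|M|$, because $\mu_i=ME_i$ with $E_i$ a column selector of norm one. Therefore
\begin{align*}
|\mu_i|\leq |C|\,|e^{AT}|/\bar\sigma \eqcolon \bar\mu .
\end{align*}
This constant depends only on $A$, $C$, $T$ and $\bar\sigma$. It does not depend on $t$, on the number of samples $k_t$, or on where the samples lie, which is the uniformity needed later in Theorem~\ref{thm:lin1}.

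The main obstacle is exactly this uniformity. Without the assumed lower bound $\bar\sigma$, the samples could cluster and the left inverse could blow up. The hypothesis on $\sigma_{\mathrm{min}}$ is what rules this out, and no further estimate should be needed. I would also state explicitly that $k_t$ may vary with $t$; the bound on $|M|$ is independent of $k_t$, so this causes no problem.
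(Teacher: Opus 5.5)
Your proposal is correct and follows the paper's proof essentially verbatim: you take $M = Ce^{AT}\tilde{O}^{+}$ with the Moore--Penrose left inverse of $\tilde{O}=O_{\mathrm{s}}(A,C)e^{-A(t-T)}$ and use $|\tilde{O}^{+}| = 1/\sigma_{\min}(\tilde{O})\leq 1/\bar\sigma$. Your constant $|C|\,|e^{AT}|/\bar\sigma$ is slightly coarser than the paper's $|Ce^{AT}|/\bar\sigma$, and you make explicit the block-norm step $|\mu_i|\leq|M|$ that the paper leaves implicit, but neither changes the argument.
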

\begin{pf}
One possible choice for matrix $M$ which satisfies $Ce^{AT}=M O_{\mathrm{s}}(A,C)e^{-A(t-T)}$ is given by 
$M=Ce^{AT}\tilde{O}^+$
with $\tilde{O}=O_{\mathrm{s}}(A,C)e^{-A(t-T)}$ and $\tilde{O}^+$ being the pseudoinverse of $\tilde{O}$, i.e., $\tilde{O}^+=(\tilde{O}^\top \tilde{O})^{-1} \tilde{O}^\top$.
Since ${|}\tilde{O}^+{|}=\frac{1}{\sigma_{\mathrm{min}}(\tilde{O})}$,
\begin{align*}
|M|\leq \frac{{|}Ce^{AT}{|}}{\sigma_{\mathrm{min}}(\tilde{O})}\leq  \frac{{|}Ce^{AT}{|}}{\bar{\sigma}}.
\end{align*}
Thus, (\ref{eq:lem}) holds with $\bar\mu=\frac{{|}Ce^{AT}{|}}{\bar{\sigma}}$.
$\qed$
\end{pf}
In the following theorem we exploit the sufficient condition from Theorem~\ref{thm:sufcond}. For this  note the following.
\begin{rem}
\label{rem:sufconlin}
If the functions ${\alpha},\alpha_x,\gamma_1, \gamma_2,\gamma_3$ ($\bar{\alpha},\bar\alpha_x,\bar\gamma_1, \bar\gamma_2,\bar\gamma_3$) in Definition~\ref{def:iiioss} (Definition~\ref{def:sbIOSS}) are linear or quadratic, meaning that, for example $\alpha$ takes the form
\begin{align}
\alpha(|x|)=c|x| \quad \mathrm{or} \quad  \alpha(|x|)=x^\top P x
\label{eq:linquad}
\end{align}
with $c\in \mathbb{R}_{>0}$ and $P\succ0$, then we have an exponential variant of (sample-based) \iioss. By close inspection  of the proof of Theorem~\ref{thm:sufcond} it can be seen that (\ref{eq:suffcond}) with $\alpha_w,\alpha_y,\alpha_v$ being linear or quadratic functions as in (\ref{eq:linquad}) is a sufficient condition for sample-based exponential \iioss of system (\ref{eq:syslin}) with  linear or quadratic functions, respectively. Here, it is also used that for a linear system Assumption~\ref{ass:f} holds with $\rho$ being a linear function.
\end{rem}	
 While in theory an arbitrary number of measurements could occur within a finite time interval of length $T$,  we make the reasonable assumption in the following theorem that there exists a finite upper bound $\kappa(T)$ on the number of measurement instances over any such interval. Now we can establish the following results on sample-based \iioss of a sample-based observable system.
As already said at the beginning of this section, we show that  sample-based observability together with some additional condition implies sample-based \iioss. Besides requiring that the sample-based observability matrix has full column rank, which follows from sample-based observability, we assume a lower bound on the smallest singular value of the sample-based observability matrix. 
This condition, for example, may not be satisfied if the measurement instances are arbitrarily close to each other. 
\color{rev}
For instance, consider $n$ measurement instances of a single-output system that yield a full rank sample-based observability matrix. Clearly, the smallest singular value of $O_{\text{s}}(A,C)$ is arbitrarily close to zero if any  two consecutive measurement times, $t_i$ and $t_{i+1}$, become arbitrarily close.
\color{black}
\begin{thm}
\label{thm:lin1}
Consider a set $K$ as in Definition~\ref{def:K} with the number of measurements in $K_1 \cap [t-T,t]$ being less than or equal to some finite  $\kappa(T)\in \mathbb{R}_{>0}$ for any $T\in  \mathbb{R}_{>0}$ and all $t>T$.
If there exist some finite $T$ and some $\bar{\sigma}>0$ such that the sample-based observability matrix with time indices in the set   $K_1 \cap [t-T,t]$ has full column rank and satisfies $\sigma_{\mathrm{min}}(O_{\mathrm{s}}(A,C)e^{-A(t-T)})\geq\bar{\sigma}$ for all $t\geq T$, then system (\ref{eq:syslin}) is sample-based exponentially \iioss with respect to $K$.
\end{thm}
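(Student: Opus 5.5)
Our plan is to verify the sufficient condition (\ref{eq:suffcond}) of Theorem~\ref{thm:sufcond} in its quadratic form (Remark~\ref{rem:sufconlin}), using $\gamma_2(s)=c_y s^2$, $\gamma_3\equiv0$. First we have to ensure that system~(\ref{eq:syslin}) is exponentially \iioss at all, which full rank of the sampled observability matrix gives us: rank deficiency of $C$-observability would contradict full column rank of $O_{\mathrm{s}}$, so $(A,C)$ is observable, hence detectable. Theorem~\ref{thm:detect} then yields a quadratic \iioss bound, with a Lyapunov decay rate $\eta>0$ available from $-c_x|\Delta x|^2$. Assumption~\ref{ass:f} holds with $L=\max\{|A|,1\}$ and linear $\rho$. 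It therefore remains to show that, for $t^*=T$ (enlarged if necessary) and for every $K_i\in K$,
\[
\int_{T}^t |\Delta y(\tau)|^2\,d\tau\le c_1\int_0^t|\Delta w(\tau)|^2d\tau+c_2\sum_{\tau\in[0,t]\cap K_i}|\Delta y(\tau)|^2 .
\]

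The key pointwise estimate is as follows. Fix $s\ge T$ and consider the window $[s-T,s]$. By variation of constants, for $\tau\in[s-T,s]$ we have $\Delta x(\tau)=e^{A(\tau-s+T)}\Delta x(s-T)+\int_{s-T}^{\tau}e^{A(\tau-r)}\Delta w(r)dr$. Stacking $\Delta y(\tau_j)=C\Delta x(\tau_j)$ over the sampling instants $\tau_j\in K_i\cap[s-T,s]$ gives $Y=O_{\mathrm{s}}e^{-A(s-T)}\Delta x(s-T)+W$, where $|W|^2\le \kappa(T)\,c\int_{s-T}^s|\Delta w|^2$. Lemma~\ref{lem:singval_min} supplies $M$ with bounded blocks $|\mu_j|\le\bar\mu$ and $Ce^{AT}=MO_{\mathrm{s}}e^{-A(s-T)}$, hence $C\Delta x(s-T)$-type quantities can be recovered. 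In fact, $\Delta x(s-T)=\tilde O^+(Y-W)$ directly, so $|\Delta x(s-T)|^2\le \bar\sigma^{-2}2(|Y|^2+|W|^2)$. Then $|\Delta y(s)|^2\le 2|Ce^{AT}|^2|\Delta x(s-T)|^2+c\int_{s-T}^s|\Delta w|^2$. This gives
\[
|\Delta y(s)|^2\le c_3\sum_{\tau\in K_i\cap[s-T,s]}|\Delta y(\tau)|^2+c_4\int_{s-T}^s|\Delta w(r)|^2dr,
\]
with constants depending only on $T,\kappa(T),\bar\sigma,|A|,|C|$.

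Finally, we integrate this over $s\in[T,t]$. Each sample $\tau$ contributes to the windows of at most length $T$ in $s$, so the sum term integrates to at most $T c_3\sum_{\tau\in[0,t]\cap K_i}|\Delta y(\tau)|^2$, and similarly the disturbance term integrates to at most $Tc_4\int_0^t|\Delta w|^2$. This is exactly (\ref{eq:suffcond}) with quadratic $\alpha_w,\alpha_y$, and Theorem~\ref{thm:sufcond} together with Remark~\ref{rem:sufconlin} then gives sample-based exponential \iioss.

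The main obstacle is that the hypotheses (rank, $\bar\sigma$, $\kappa$) are stated only for $K_1$, while Definition~\ref{def:sbIOSS} requires the bound for all $K_i\in K$. We plan to handle this by noting that $K_i$ is a time-shift of a tail of $K_1$: $K_i=\{\tau-t^1_{i-1}:\tau\in K_1,\tau>t^1_{i-1}\}$. The singular value condition is shift-invariant after multiplying by $e^{-A(s-T)}$, since $O_{\mathrm{s}}e^{-A(s-T)}$ depends only on the relative times $\tau_j-(s-T)$. Consequently the window estimate holds for $K_i$ with the same constants, at least once windows lie in the shifted domain; the first window may require enlarging $t^*$ by $d_{\max}$.
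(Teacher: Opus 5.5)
Your proposal is correct and follows the paper's route. You obtain exponential \iioss from observability via Theorem~\ref{thm:detect}, bound $|\Delta y(s)|^2$ by the samples in $[s-T,s]$ plus a disturbance integral using the pseudoinverse of $O_{\mathrm{s}}(A,C)e^{-A(s-T)}$ (which is exactly what Lemma~\ref{lem:singval_min} encodes), and integrate over $[T,t]$ to verify (\ref{eq:suffcond}) with quadratic functions. Your refinements are slightly cleaner but do not change the approach: you keep the $-c_x|\Delta x|^2$ term to get a genuinely time-discounted bound, you integrate the pointwise estimate directly (each sample lies in windows for an $s$-set of length $T$) instead of using the paper's ess-sup/overlap argument, and you make the passage from $K_1$ to all $K_i$ explicit through shift-invariance of $O_{\mathrm{s}}e^{-A(s-T)}$.
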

\begin{pf}
Since the sample-based observability matrix has full column rank, the system is sample-based observable, and therefore also observable. This in turn implies by Theorem~\ref{thm:detect} that the system is exponentially \iioss. Moreover, Assumption~\ref{ass:f} is always satisfied for linear systems. Thus, it only remains to be shown that condition (\ref{eq:suffcond}) in  Theorem~\ref{thm:sufcond} holds with linear functions,  implying sample-based exponential \iioss by Remark~\ref{rem:sufconlin}.
Consider the output  difference at some time $t=\vartheta T+l$ with $\vartheta\in \mathbb{I}_{>0}$ and  $l \in \mathbb{R}_{\geq 0}$ 
\begin{align}
\begin{aligned}
\Delta y(t)=Ce^{AT}\Delta x(t-T)+\int_{t-T}^{t} Ce^{A(t-\tau)}\Delta w(\tau) \ d\tau.
\end{aligned}
\label{eq:Deltay}
\end{align}	
Thus, we can write 
\begin{align}
	\begin{aligned}
{|}	\Delta y(t){|}\leq&{|}Ce^{AT}\Delta x(t-T){|}\\&+\int_{t-T}^{t} {|}Ce^{A(t-\tau)}\Delta w(\tau){|} \ d\tau.
\end{aligned}
\label{eq:Deltayabs}
\end{align}
By Lemma~\ref{lem:singval_min} we know that 
\begin{align}
	\begin{aligned}
&{|}Ce^{AT}\Delta x(t-T){|}= {|}M O_{\mathrm{s}}(A,C)e^{-A(t-T)}\Delta x(t-T){|}\\
\leq   &\sum_{j\in K_1 \cap [t-T,t]} \bar\mu{|}Ce^{A(j-t+T)}\Delta x(t-T){|}. 
\end{aligned}
\label{eq:boundFromLem}
\end{align}
Using (\ref{eq:Deltay}),  we can further upper bound (\ref{eq:boundFromLem}) as follows
\begin{align}
\hspace{-0.8em}
\begin{aligned}
&{|}Ce^{AT}\Delta x(t-T)| \\
\leq& \bar\mu  \sum_{j\in K_1 \cap [t-T,t]} \Big({|}\Delta y(j){|}+\int_{t-T}^{j}{|}Ce^{A(j-\tau)}\Delta w(\tau){|} \ d\tau \Big). 
\end{aligned}
\label{eq:boundT}
\end{align}
Substituting (\ref{eq:boundT}) in (\ref{eq:Deltayabs}) we obtain
\begin{align*}
\begin{aligned}
{|}\Delta y(t){|}&\leq \bar\mu  \sum_{j\in K_1 \cap [t-T,t]} {|}\Delta y(j){|}\\&+(\bar\mu \kappa(T)+1)\int_{t-T}^{t}k_w \ {|}\Delta w(\tau){|} \ d\tau
\end{aligned}
\end{align*}
with $k_w \geq {|}Ce^{Ai}{|}$ for all $i\in [0,T]$. 
\color{rev}
To derive (\ref{eq:suffcond}) with quadratic functions, we square both sides and apply Young’s and Jensen’s inequalities to obtain the following upper bound
\begin{align}
	\begin{aligned}
		{|}\Delta y(t){|}^2\leq& 2\bar\mu^2 \kappa(T)  \sum_{j\in K_1 \cap [t-T,t]} {|}\Delta y(j){|}^2\\&+2(\bar\mu \kappa(T)+1)^2T\int_{t-T}^{t}k_w^2 \ {|}\Delta w(\tau){|}^2 \ d\tau.
	\end{aligned}
	\label{eq:boundy_lin}
\end{align}
\color{black}
Notice that  $\int_T^t {|}\Delta y(j) {|}^2 \  dj=\sum_{i=1}^{\vartheta-1} \int_{iT}^{(i+1)T}{|}\Delta y(j) {|}^2 \ dj + \int_{\vartheta T}^t{|}\Delta y(j) {|}^2 \ dj$ and $\int_{iT}^{(i+1)T}{|}\Delta y(j) {|}^2 \ dj  \leq T \|\Delta y \|^2_{iT:(i+1)T}$. Applying (\ref{eq:boundy_lin}) for each of these $\vartheta$ time intervals,
we can derive the following bound
\begin{align}
\int_{T}^tc_2 {|}\Delta y(\tau){|}^2 \  d\tau
&\leq \int_0^t4T^2 c_2(\bar\mu \kappa(T)+1)^2k_w^2{|} \Delta w(\tau)
{|}^2 \ d\tau\nonumber \\&+ \sum_{\tau \in [0,t] \cap K_1}4T \bar\mu^2\kappa(T)  c_2 {|}\Delta y(\tau){|}^2
\label{eq:sufconK1}
\end{align}
where $c_2\in\mathbb{R}_{>0}$ is such that $\gamma_2(s)=c_2 s^2$ satisfies the exponential \iioss bound (\ref{eq:ioss}) for a linear system (compare proof of Theorem~\ref{thm:detect}). 
Note that when applying (\ref{eq:boundy_lin}) to the essential supremum of $|\Delta y(t)|^2$  in each of the $\vartheta$ subintervals, the resulting bounds involve the past interval $[t-T,t]$ which may extend  into the preceding subinterval. Hence, due to splitting the complete  interval $[T,t]$ in $T$-long subintervals, the resulting bounds from applying (\ref{eq:boundy_lin})  can overlap only between neighbouring subintervals.  This overlap accounts for the additional factor  two in the bound  (\ref{eq:sufconK1}).
Since we assume sample-based observability in any $T$-long interval, (\ref{eq:sufconK1}) also holds for all $K_i\in K$ by definition.
Thus (\ref{eq:suffcond}) holds with $\alpha_w(s)=4T^2 c_2(\bar\mu \kappa(T)+1)^2k_w^2 s$, $\alpha_y(s)=4T \bar\mu^2\kappa(T)s$ and $t^*=T$. 
 $\qed$
\end{pf}
\begin{rem}
	\color{rev}
	Note that the proofs of Theorem~\ref{thm:detect} and Theorem~\ref{thm:lin1} are formulated to yield quadratic functions in (\ref{eq:eIOSSsq}) and (\ref{eq:sufconK1}) in order to obtain sample-based exponential \iioss in the form of Assumption~\ref{ass:eIOSS}. However, they can also be carried out to yield linear functions in (\ref{eq:eIOSSsq}) and (\ref{eq:sufconK1})  instead. \color{black}
\end{rem}
 The result in Theorem~\ref{thm:lin1} can be relaxed. In fact, it is sufficient that the conditions in Theorem~\ref{thm:lin1} only hold for the part of the system that is not asymptotically stable. 	For this, consider the system in Jordan canonical form with
\begin{align*}
A_{\mathrm{J}}=T_{\mathrm{J}}^{-1}AT_{\mathrm{J}}=\begin{pmatrix}
A_{\mathrm{s}}&0\\0&A_{\mathrm{us}} \end{pmatrix}, \quad C_{\mathrm{J}}=CT_{\mathrm{J}}=\begin{pmatrix} C_{\mathrm{s}}&C_{\mathrm{us}}\end{pmatrix}
\end{align*}
where all the eigenvalues of $A$  with negative real part are included in $A_{\mathrm{s}}$, and the remaining ones in $A_{\mathrm{us}}$.
\begin{thm}
\label{thm:sbobsvusioss}
\looseness=-1
Consider a set $K$ as in Definition~\ref{def:K} with the number of measurements in $K_1 \cap [t-T,t]$ being less than or equal to some finite  $\kappa(T)\in \mathbb{R}_{>0}$ for any $T\in  \mathbb{R}_{>0}$ and all $t>T$.
If there exists some finite $T$ and some $\bar{\sigma}>0$ such that the sample-based observability matrix $O_{\mathrm{s}}(A_{\mathrm{us}},C_{\mathrm{us}})$ with time indices in the set   $K_1 \cap [t-T,t]$ has full column rank and  $\sigma_{\mathrm{min}}(O_{\mathrm{s}}(A_{\mathrm{us}},C_{\mathrm{us}})e^{-A_{\mathrm{us}}(t-T)})\geq\bar{\sigma}$ for all $t>T$, then system (\ref{eq:syslin}) is sample-based exponentially \iioss with respect to $K$.
\end{thm}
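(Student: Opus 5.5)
Transform to Jordan coordinates $z=T_{\mathrm{J}}^{-1}x=(z_{\mathrm{s}},z_{\mathrm{us}})$. This is an invertible linear change of variables, so sample-based exponential \iioss of (\ref{eq:syslin}) is equivalent to the same property in $z$-coordinates, up to rescaling the constants. In the new coordinates the dynamics are $\dot z_{\mathrm{s}}=A_{\mathrm{s}}z_{\mathrm{s}}+\tilde w_{\mathrm{s}}$ and $\dot z_{\mathrm{us}}=A_{\mathrm{us}}z_{\mathrm{us}}+\tilde w_{\mathrm{us}}$, with $\tilde w=T_{\mathrm{J}}^{-1}(Bu+w)$ and output $y=C_{\mathrm{s}}z_{\mathrm{s}}+C_{\mathrm{us}}z_{\mathrm{us}}+Du$.

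The plan is to bound the two blocks separately and then add the bounds.

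\textbf{Stable block.} Since $A_{\mathrm{s}}$ is Hurwitz, there is $P_{\mathrm{s}}\succ0$ with $A_{\mathrm{s}}^\top P_{\mathrm{s}}+P_{\mathrm{s}}A_{\mathrm{s}}=-I$. Differentiating $|\Delta z_{\mathrm{s}}|_{P_{\mathrm{s}}}^2$ and applying Young's inequality gives
\[
|\Delta z_{\mathrm{s}}(t)|^2_{P_{\mathrm{s}}}\leq |\Delta z_{\mathrm{s}}(0)|^2_{P_{\mathrm{s}}}e^{-\eta_{\mathrm{s}}t}+c\int_0^t e^{-\eta_{\mathrm{s}}(t-\tau)}|\Delta w(\tau)|^2d\tau .
\]
This is an exponential \iiss bound that uses no outputs, so it is trivially of the sample-based form.

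\textbf{Unstable block.} Here I would apply Theorem~\ref{thm:lin1} to the pair $(A_{\mathrm{us}},C_{\mathrm{us}})$. That requires treating $C_{\mathrm{s}}\Delta z_{\mathrm{s}}$ as an additional known disturbance in the output. Define the auxiliary output $\Delta y_{\mathrm{us}}:=\Delta y-C_{\mathrm{s}}\Delta z_{\mathrm{s}}=C_{\mathrm{us}}\Delta z_{\mathrm{us}}$. The hypotheses of the theorem are exactly those of Theorem~\ref{thm:lin1} for this subsystem. Hence the subsystem $(A_{\mathrm{us}},C_{\mathrm{us}})$ with output $y_{\mathrm{us}}$ is sample-based exponentially \iioss, giving
\[
|\Delta z_{\mathrm{us}}(t)|^2\lesssim |\Delta z_{\mathrm{us}}(0)|^2e^{-\bar\eta t}+\int_0^t e^{-\bar\eta(t-\tau)}|\Delta w|^2d\tau+\sum_{\tau\in[0,t]\cap K_i}e^{-\bar\eta(t-\tau)}|\Delta y_{\mathrm{us}}(\tau)|^2 .
\]
To make this rigorous I would double-check that the proof of Theorem~\ref{thm:lin1}, via Theorem~\ref{thm:detect}, does not use observability of the full pair. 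It does not: observability of $(A_{\mathrm{us}},C_{\mathrm{us}})$ follows from the rank condition.

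\textbf{Combining the blocks.} Bound $|\Delta y_{\mathrm{us}}(\tau)|^2\leq 2|\Delta y(\tau)|^2+2|C_{\mathrm{s}}|^2|\Delta z_{\mathrm{s}}(\tau)|^2$. The main obstacle is the resulting term
\[
S(t):=\sum_{\tau\in[0,t]\cap K_i}e^{-\bar\eta(t-\tau)}|\Delta z_{\mathrm{s}}(\tau)|^2,
\]
a discrete sum of the stable-state error, which must be converted into the allowed integral and initial-condition terms. I would insert the stable-block bound at each sample time $\tau$, using $\eta'=\min(\eta_{\mathrm{s}},\bar\eta)/2$.

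For the initial-condition part, $\sum_{\tau\in K_i\cap[0,t]}e^{-\bar\eta(t-\tau)}e^{-\eta_{\mathrm{s}}\tau}\leq C e^{-\eta' t}$. This holds because the samples are counted at most $\kappa(1)$ per unit interval, so the sum is dominated by a geometric series times $te^{-2\eta' t}\lesssim e^{-\eta' t}$.

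For the disturbance part, exchange sum and integral:
\[
\sum_{\tau}e^{-\bar\eta(t-\tau)}\int_0^\tau e^{-\eta_{\mathrm{s}}(\tau-s)}|\Delta w(s)|^2ds=\int_0^t|\Delta w(s)|^2\sum_{\tau\in[s,t]\cap K_i}e^{-\bar\eta(t-\tau)-\eta_{\mathrm{s}}(\tau-s)}ds .
\]
The inner sum is bounded by $C e^{-\eta'(t-s)}$, again using the density bound $\kappa$.

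Finally, add the two block bounds, pass to the lower bound $\min$ eigenvalue on $|\Delta z|^2$, and transform back via $T_{\mathrm{J}}$. This yields (\ref{eq:ass1})-type quadratic bounds with rate $\eta'$, i.e.\ sample-based exponential \iioss with respect to $K$.

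The note on density is needed because the bound on measurements per interval is stated for $K_1$. It transfers to every $K_i$, since each $K_i$ is a tail-shift of the same sequence $\mathcal D$.
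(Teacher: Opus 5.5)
Your proof is correct, and its overall strategy is the paper's: split into stable and unstable blocks, use an exponential i-iISS bound for the Hurwitz block, and get a sample-based exponential i-iIOSS bound for $(A_{\mathrm{us}},C_{\mathrm{us}})$ from Theorem~\ref{thm:lin1} using the auxiliary output $C_{\mathrm{us}}\Delta z_{\mathrm{us}}=\Delta y-C_{\mathrm{s}}\Delta z_{\mathrm{s}}$. The two arguments differ only in how the resulting cross term $\sum_{\tau\in[0,t]\cap K_i}e^{-\bar\eta(t-\tau)}|\Delta z_{\mathrm{s}}(\tau)|^2$ is absorbed.

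The paper does not do this in continuous time. It converts the discounted sum over sampled outputs into a max-based expression (referring to \cite{Knu20}) so that it can reuse the combination steps of the discrete-time proof in \cite[Theorem~4]{Kra25d}, and gives no further details.

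You stay in sum/integral form instead. You substitute the stable-block bound at each sample time and exchange sum and integral, which is legitimate by Tonelli since everything is nonnegative. You then bound the inner sum $\sum_{\tau\in[s,t]\cap K_i}e^{-\bar\eta(t-\tau)-\eta_{\mathrm{s}}(\tau-s)}$ by counting samples. The initial-condition sum needs only a count $|K_i\cap[0,t]|\lesssim\kappa(1)(t+2)$ times $e^{-2\eta' t}$; no geometric series is involved, though your conclusion there is right.

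What your route buys:
\begin{itemize}
\item It is self-contained in continuous time.
\item It gives explicit constants and the rate $\eta'=\min\{\eta_{\mathrm{s}},\bar\eta\}/2$.
\item It shows exactly where the hypothesis on $\kappa(T)$ enters, including the transfer of that bound from $K_1$ to every $K_i$ via the tail-shift $K_i=(K_1\cap(t^1_{i-1},\infty))-t^1_{i-1}$. The paper leaves this implicit.
\end{itemize}

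What the paper's route buys is brevity, by leaning on existing discrete-time machinery. Carried out in continuous time, though, its sum-to-max conversion also costs half the decay rate. It also needs a uniform bound on $\sum_{\tau\in K_i\cap[0,t]}e^{-\bar\eta(t-\tau)/2}$, which is the same sample-density assumption you use. So the two arguments rely on the hypothesis in essentially equivalent ways.
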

\begin{pf}	
The proof can, for the most part, be carried out analogously to the proof of the corresponding theorem in the discrete-time case  \cite[Theorem 4]{Kra25d}.
The main idea is to use the  \iiss bound for the  stable subsystem and the sample-based  \iioss bound for the unstable subsystem  resulting from application of Theorem~\ref{thm:lin1}. 
Note that we use in the discrete-time case in \cite{Kra25d} a max-based formulation of the \ioss bound. 
Due to the time-discounting, we can transform the sum term of the sampled outputs in the sample-based \iioss bound (\ref{eq:sbiIOSS}) to a max-based formulation (see e.g. \cite[Equation (25)]{Knu20}). This allows us to follow similar steps as in the discrete-time case to derive a sample-based  \iioss bound for system (\ref{eq:syslin}).
$\qed$
\end{pf}	
For continuous-time observable systems, sampling strategies that ensure (\ref{eq:Osc}) has full column rank are shown in \cite{Wan11,Zen16}. These works analyze the number of time instances at which the signal $t \rightarrow Ce^{At} \Delta \chi$ becomes zero  with  $\Delta \chi \neq 0$.  
An upper bound for the number of zeros that this signal can have within a specified time interval is given by	
\begin{equation}
k^{*}= r-1+\frac{T\delta_\lambda}{2\pi}, 
\label{eq:sbObsvC}
\end{equation}	
where $T$ is the length of the time interval, $r=\sum_{i=1}^\varsigma r_i$, $r_i$ denotes the index of the eigenvalue $\lambda_i$ of the matrix $A$, $\varsigma$ is the number of pairwise distinct eigenvalues of the system and $\delta_\lambda=\max_{1\leq i,j\leq \varsigma} \Im (\lambda_i-\lambda_j)$ with  $\Im(\cdot)$ indicating the imaginary part \cite{Zen16}. By taking more than $k^*$ measurements within an interval of length $T$, (\ref{eq:Osc}) is guaranteed to have full column rank.	
In the following theorem, we show that selecting  $k>k^*$ samples in a $T$-long time interval that additionally satisfy  $\tau_{i+1}-\tau_i\geq\epsilon$, $i=1,\ldots, k-1$ for some $\epsilon > 0$ implies that there exists a lower bound on the minimum singular value of the sample-based observability matrix.
\begin{thm}
\label{thm:addcondsat}
Suppose the signal $t \rightarrow Ce^{At}\Delta \chi$, $\Delta \chi\neq 0$ has at most $k^*$ zeros in the interval $[t-T,t]$.  Fix $\epsilon>0$ and choose $k=\min\{i|i>k^*, i\in \mathbb{I}\}$ sampling times
\[
\tau_1,\dots,\tau_k\in[t-T,t],\qquad \tau_{i+1}-\tau_i\geq\epsilon\quad\forall i=1,\ldots, k-1.
\]
Then, there exists $\bar\sigma > 0$ such that for all $t \geq T$ the  sample-based observability matrix in the time interval $[t-T,t]$, referred to as $O_{\mathrm{s},t}(A,C)$, satisfies 
\begin{align*}
\sigma_{\min}(	O_{\mathrm{s},t}(A,C)e^{-A(t-T)})\geq \bar{\sigma}. 
\end{align*}
\end{thm}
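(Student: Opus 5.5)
The plan is to reduce the claim to a compactness argument over a time-shift-invariant parameter set. The first step is to remove the dependence on $t$. Substitute $s_i \coloneq \tau_i-(t-T)\in[0,T]$. Then
\begin{align*}
O_{\mathrm{s},t}(A,C)e^{-A(t-T)}=\begin{pmatrix}(Ce^{As_1})^\top & \cdots & (Ce^{As_k})^\top\end{pmatrix}^\top \eqcolon \Phi(s_1,\ldots,s_k).
\end{align*}
This matrix depends only on the relative sampling instants. Therefore, for every $t\geq T$, the matrix in the statement equals $\Phi(s)$ for some $s$ in
\begin{align*}
S_\epsilon\coloneq\{s\in[0,T]^k:\ s_{i+1}-s_i\geq\epsilon,\ i=1,\ldots,k-1\}.
\end{align*}
This set is independent of $t$. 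It is a closed and bounded subset of $\mathbb{R}^k$, hence compact. We assume $S_\epsilon$ is nonempty, i.e.\ $(k-1)\epsilon\leq T$; otherwise no admissible sampling exists and the statement is vacuous. We also use that the zero-count hypothesis is time-invariant: $t\mapsto Ce^{At}\Delta\chi$ having at most $k^*$ zeros on $[t-T,t]$ is the same as $s\mapsto Ce^{As}e^{A(t-T)}\Delta\chi$ having at most $k^*$ zeros on $[0,T]$. Since $e^{A(t-T)}$ is invertible, this is a statement about all nonzero vectors, so it suffices to work on $[0,T]$.

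The second step is pointwise full rank. Fix $s\in S_\epsilon$ and suppose $\Phi(s)z=0$ for some $z\neq0$. Then $Ce^{As_i}z=0$ for all $i$. Because $\epsilon>0$, the $s_i$ are pairwise distinct, so the signal $s\mapsto Ce^{As}z$ vanishes at $k>k^*$ distinct points of $[0,T]$. This contradicts the zero bound (\ref{eq:sbObsvC}) from \cite{Zen16}, as assumed in the hypothesis. Hence $\Phi(s)$ has full column rank and $\sigma_{\min}(\Phi(s))>0$ for every $s\in S_\epsilon$.

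The third step concludes by compactness. The map $s\mapsto\Phi(s)$ is continuous because the matrix exponential is continuous. The smallest singular value of an $kp\times n$ matrix, $\sigma_{\min}(\Phi)=\min_{|z|=1}|\Phi z|$, is a continuous (indeed $1$-Lipschitz in spectral norm) function of the matrix. So $s\mapsto\sigma_{\min}(\Phi(s))$ is a positive continuous function on the compact set $S_\epsilon$. It therefore attains a minimum $\bar\sigma>0$. This $\bar\sigma$ depends only on $A,C,T,k,\epsilon$ and not on $t$, which gives the claim for all $t\geq T$.

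The main point requiring care is the role of $\epsilon$. Without the separation constraint, the parameter set would include configurations where samples coincide; there rank is lost and $\sigma_{\min}$ can approach zero, as noted before Theorem~\ref{thm:lin1}. The constraint $s_{i+1}-s_i\geq\epsilon$ is exactly what keeps the closure of the admissible set inside the region of distinct samples. This makes the infimum a minimum and yields a uniform positive bound. A secondary point is that the zero-count hypothesis must hold for every nonzero direction $z$; this is ensured by the invertibility of $e^{A(t-T)}$ as argued above.
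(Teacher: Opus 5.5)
Your proposal is correct and follows essentially the same route as the paper. You shift to relative sampling times $s_i=\tau_i-(t-T)$ so that the matrix no longer depends on $t$, use the zero-count bound with $k>k^*$ distinct samples to get full column rank at every admissible configuration, and conclude from compactness of the constraint set, continuity of $\sigma_{\min}$ and the extreme value theorem; your remarks on nonemptiness of the parameter set and on transferring the zero-count hypothesis to all nonzero directions via invertibility of $e^{A(t-T)}$ spell out two points the paper leaves implicit.
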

\begin{pf}
We denote the set of exponents of the matrix exponentials in $O_{\mathrm{s,t}}(A,C)e^{-A(t-T)}$ by $\mathcal{T}=\Bigl\{(s_1,s_2,\dots,s_k)|s_1\in[0,T-(k-1)\epsilon],\ s_i\in[s_{i-1}+\epsilon,T-(k-i)\epsilon], \ \forall i=2,\ldots,k\Bigr\}$ where $s_i=\tau_i-t+T,\ i=1,\ldots,k$.
Note that $\mathcal{T}$ is a compact set and 	$g: \mathcal{T}\to\mathbb{R}$,	
$g(s_1,s_2,\ldots,s_k)=\sigma_{\min}\Bigl(\begin{pmatrix}
(Ce^{As_1})^\top&(Ce^{As_2})^\top& \ldots&(Ce^{As_k})^\top 
\end{pmatrix}^\top\Bigr)$
is a continuous function. Since $k>k^*$, the results from \cite{Zen16}, discussed above, guarantee that $O_{\mathrm{s},t}(A,C)e^{-A(t-T)}$ has full column rank. Hence, on the compact set $\mathcal{T}$, the continuous function $g$ never vanishes.
Therefore, by the extreme‐value theorem $g$ attains a strictly positive minimum
\begin{align*}
\bar{\sigma}=\min_{(s_1,s_2,\ldots,s_k)\in\mathcal{T}}
g(s_1,\dots,s_k)
>0.
\end{align*}
Thus, for every admissible choice of
$\{\tau_i\}_{i=1}^k$ one has $\sigma_{\min}\Big(O_{\mathrm{s},t}(A,C)e^{-A(t-T)}\Big)  \geq \bar{\sigma}$. 
$\qed$
\end{pf}
\par  Hence, applying (\ref{eq:sbObsvC}) to the subsystem $(A_{\mathrm{us}},C_{\mathrm{us}})$  and using a measurement sequence with  more than $k^*$ samples in every  $T$-long time interval, with consecutive sampling instants satisfying  $\tau_{i+1}-\tau_i\geq\epsilon$,  sample-based exponential \iioss of (\ref{eq:syslin}) follows from Theorem~\ref{thm:sbobsvusioss}.
\section{Simulation example}
\label{sec:Sim}
To illustrate the functionality of the sample-based MHE scheme from Section \ref{sec:sbMHE}, we use the highly nonlinear  six-dimensional model of the pituitary-thyroid feedback loop from \cite{Wol22} for a hypothyroid patient. \color{rev}  In such biomedical applications, obtaining measurements typically requires
taking blood samples, which is impractical to do on a frequent basis. \color{black}
The system state is $x=\begin{pmatrix} T_{4,th}& T_4& T_{3,p}& T_{3,c}& TSH & TSH_c\end{pmatrix}^\top$ where $T_4,T_{3,p}$, and $TSH$ are the peripheral concentrations of the hormones thyroxine ($T_4$), triiodothyronine ($T_3$), and the thyroid-stimulating
hormone ($TSH$) and $T_{4,th}, T_{3,c},TSH_c$ are  the internal concentrations of $T_4$ in the thyroid and  of $T_3,TSH$ in the pituitary, respectively. 
We consider the peripheral concentrations as the system's output that we can directly measure by taking blood samples. Furthermore we assume additive process noise $w\in \mathbb{R}^6$ acting on the first and sixth state equations,  namely  $|w_1|\leq 10^{-13}$ and  $|w_6|\leq 0.6$ and $w_i=0$, $i=2,\ldots,5$,  and  additive measurement noise $v\in \mathbb{R}^3$ with  $|v_1|\leq 5\cdot 10^{-10}$,  $|v_2|\leq 3\cdot 10^{-11}$ and  $|v_3|\leq 0.05$. 
The true initial state is  $x(0)=\begin{pmatrix}4.4\cdot 10^{-13}& 2.1\cdot 10^{-8} &9.6 \cdot 10^{-10}& 2.1\cdot 10^{-9}&4.5& 4.8 \end{pmatrix}^\top$ and as the  prior estimate we use  \emergencystretch=5em  $\bar{x}(0)= \begin{pmatrix}7.4\cdot 10^{-13}& 1.4\cdot 10^{-8} &4.2\cdot 10^{-10}& 9\cdot 10^{-10}&7.7& 8.2\end{pmatrix}^\top$.
\begin{figure}[!t]
\centering
\setlength\figurewidth{0.9\columnwidth} 
 \begingroup
\input{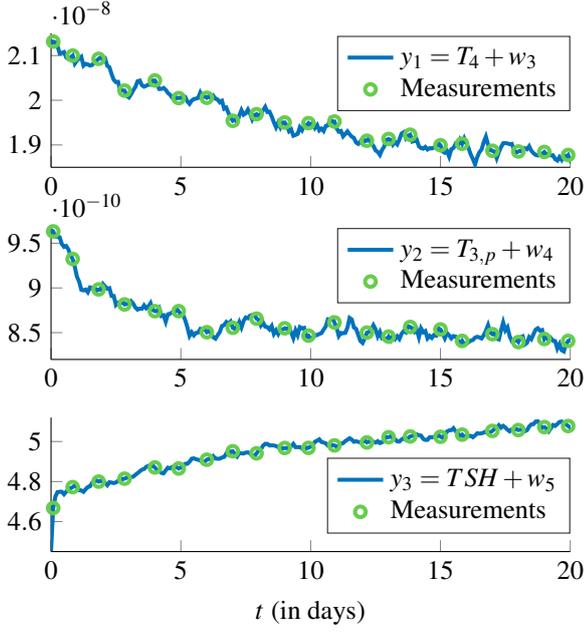}
\endgroup
\caption{Noisy system output $y$ (blue) and infrequently sampled discrete measurements (green) used in the sample-based MHE scheme.} 
\label{fig:hypo_y}
\end{figure}
We consider an irregular measurement sequence where we obtain on average every 24 hours a measurement. \color{rev}For the presented setup we applied the sample-based MHE scheme from Section \ref{sec:sbMHE}. We  verified the sample-based exponential \iioss inequlity (\ref{eq:ass1}) for the simulated trajectories. 
\color{black}
In Figure~\ref{fig:hypo_y}, the noisy output signal $y(t)$  is displayed and the time instants  at which measurements were taken are marked.  For the parameterization of the cost function we selected $\eta=0.35$, $P_2=\mathrm{diag}\{1,5,1,1,1,1\}$, $Q_w=1.39\cdot10^{-3}\mathrm{diag}\{1,0,0,0,0,0.1\}$, $Q_v=\mathrm{diag}\{100,100,10\}$ and $R=\mathrm{diag}\{200,150,100\}$. With this choice and a horizon length of $M=5\ \textrm{days}$, we obtain accurate state estimates that quickly converge close to the true states, as can be seen in Figures~\ref{fig:hypo_xp} and \ref{fig:hypo_xc}.
\color{rev}
While the use of a standard moving horizon estimator is not practically feasible for the considered application due to the limited availability of measurements, we nevertheless performed simulations using a standard MHE\footnote{\color{rev}Here, the output is assumed to be continuously available, and the optimization problem is solved at each sampling instant. In our implementation, a sampling time of 2 hours was used.} for comparison purposes.
\color{black} Figure~\ref{fig:hypo_xp} shows the true hormone concentrations, MHE estimates, and sample-based MHE estimates  of the hormone concentrations that we can directly measure and Figure~\ref{fig:hypo_xc} shows the unmeasured hormone concentrations and their MHE and sample-based MHE estimates. 
\color{rev}
 It can be seen that the MHE estimates converge slightly faster to the true states than the sample-based MHE estimates, but the overall results are very similar. Hence, despite relying on considerably less measurement data and solving the NLP less often, we obtain similar estimation accuracy.
\color{black}
   \color{rev}  Furthermore, we performed 50 simulations with different noise realizations, which support this observation. To account for large differences in state magnitudes and mitigate ill-conditioning, the simulations were carried out in a transformed state space obtained via $T=\mathrm{diag}\{10^{12}, 10^{7},  10^{9}, 10^{8}, 1, 1\}$ (cf. the scaled ODEs in \cite{Sir26}). In this representation, the corresponding average root mean squared error yields  0.2244 for the standard MHE and 0.2275 for the sample-based MHE.
  \color{black}
\begin{figure}[!t]
\centering
\setlength\figurewidth{0.9\columnwidth} 
\input{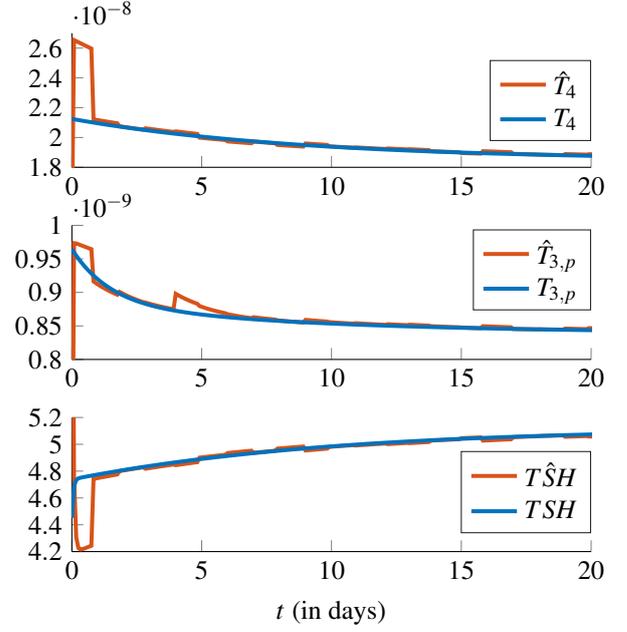}
\caption{Real system states (blue), sample-based MHE  estimates (red), and MHE estimates (green) of the hormone concentrations that are directly measurable.} 
\label{fig:hypo_xp}
\end{figure}
\begin{figure}[!t]
\centering
\setlength\figurewidth{0.9\columnwidth} 
\input{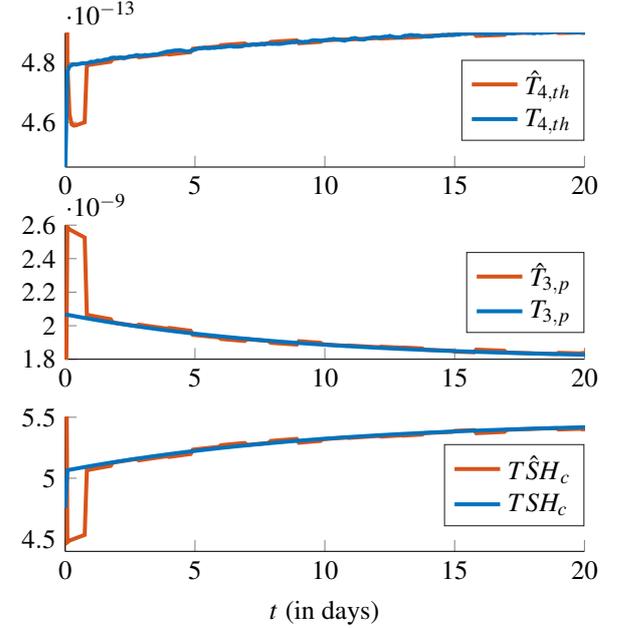}
\caption{Real system states (blue), sample-based MHE estimates (red), and MHE estimates (green)  of the unmeasured  hormone concentrations.} 
\label{fig:hypo_xc}
\end{figure}
\section{Conclusion}
In this work, we studied detectability and state estimation of continuous-time systems in case of irregular measurement sequences. We introduced a sample-based version of \iioss  that considers 
only the limited output information available from the irregular samples of the system. Additionally, we provided a sufficient condition  for an \iioss system to be sample-based \iioss. This condition was exploited to connect the proposed property of sample-based \iioss with previously established results on sample-based observability for linear systems. Furthermore, we proposed a robustly stable sample-based MHE scheme. The applicability of this method was illustrated by a simulation  example of a biomedical application, specifically the estimation of hormone concentrations of the pituitary-thyroid feedback loop, where typically only few and irregular measurements are available.
\begin{ack}                               
This work received funding from the European Research Council (ERC) under the European Union’s Horizon 2020 research and innovation programme (grant agreement No 948679).
\end{ack}
\bibliographystyle{ieeetr}
\bibliography{autosam}           
\appendix
\section{Proof of Theorem~\ref{thm:stab}}
As stated in the short proof below Theorem~\ref{thm:stab}, it can be shown that the proposed sample-based MHE scheme is equivalent, i.e., provides the same estimates, to a scheme with the
varying horizon length $M_{t}=\max\{t,M+\delta_t\}$ where $\delta_t:=t-\max\{0,j \in K_s|j\leq t\}$ refers to the time that has passed since the last measurement was available. Note that if $t\in K_s$, then $\delta_t=0$.
Solving the optimization problem at a high sampling frequency that includes the (generally sparse) measurement instants, and using this new definition of $M_t$,  the solution to the optimization problem is given for any $t\geq0$ by 
\begin{align*}
	\begin{aligned}	
		\hat\chi_t^*&=\hat\chi_{t-\delta_t}^*,\\
		\hat{w}_t^*(j) &= \hat{w}_{t-\delta_t}^*(j), \ j\in [0,M_{t-\delta_t}],\\
		\hat{w}_t^*(j) &= 0, \ j \in  (M_{t-\delta_t},M_t]. 
	\end{aligned}
\end{align*}
This result follows by adapting  Proposition~1 in \cite{Kra25d} to the cost function considered  here.
That is why it is sufficient to solve the NLP (\ref{eq:NLP})-(\ref{eq:objfunc}) only at $t\in K_s$, and obtaining $\hat{x}(t)$  in between two instances  via  open-loop  prediction, as formulated in Section~\ref{sec:MHE_setting}.

In the following we now give the stability proof for the MHE formulation with $M_{t}=\max\{t,M+\delta_t\}$ which thus implies that the derived stability result also holds for the sample-based MHE formulation in Section~\ref{sec:MHE_setting}.
\par Due to the NLP constraints, the estimated trajectories satisfy (1), $\hat{x}_t(\tau)\in \mathcal{X}$ for all $\tau\in [t-M_t,t]$ and $\hat{w}_t(\tau)\in \mathcal{W},\hat{y}_t(\tau)\in \mathcal{Y}$ for all $\tau\in [t-M_t,t]$. Furthermore, we define $\zeta_{\tau}\coloneqq\min\{j|j\in [\tau,\infty)\cap K_s\}-\tau$, i.e., $\zeta_\tau$ is the  time  that will pass until the next measurement from $\tau$ on (and $\zeta_{\tau}=0$ in case that $\tau \in K_s$).
\par Consider some time $t\geq M$. Since $M\geq d_{\text{max}}$, we can  partition the time interval $[t-M_t,t]$ into $[t-M_t,t-M_t+\zeta_{t-M_t}]$ and $[t-M_t+\zeta_{t-M_t},t]$. 
Notice that the second interval is useful because, by definition of $\zeta_{\tau}$, there is a measurement at time $t - M_t + \zeta_{t-M_t}$; hence, every subsequent measurement instance within the second time interval follows the pattern described in Definition~2 using a contiguous subsequence of $\mathcal{D}$.
Thus, we can apply  (\ref{eq:ass1})  taking $t-M_t+\zeta_{t-M_t}$ as initial time,  which yields
\begin{align}
		&{|}\hat{x}(t)-x(t){|}_{P_1}^2 \nonumber\\\leq& e^{-\eta(M_t-\zeta_{t-M_t})} {|}\hat{x}_t^*(\zeta_{t-M_t})-x(t-M_t+\zeta_{t-M_t}){|}_{P_2}^2\nonumber\\&+\int_{\zeta_{t-M_t}}^{M_t}e^{-\eta(M_t-\tau)}{|}\hat{w}_t^*(\tau)-w(t-M_t+\tau){|}_{Q_w}^2 d\tau\nonumber \\& 
		+\sum_{\substack{\tau+t-M_t+\zeta_{t-M_t}\\\in [t-M_t+\zeta_{t-M_t},t] \cap K_s}}e^{-\eta(M_t-\tau)}\Big({|}\hat{v}_t^*(\tau)-v(t-M_t+\tau){|}_{Q_v}^2\nonumber\\&+{|}\hat{y}_t^*(\tau)-y(t-M_t+\tau){|}_R^2\Big). 	\label{eq:eiossEst}
\end{align}
Due to $\zeta_{t-M_t}< d_{\text{max}}$ 
and Assumption~7 considering all $K_i\in K$, we can conclude that for a $\zeta_{t-M_t}$-long time interval it holds that
\begin{align}
		&	{|}\hat{x}_t^*(\zeta_{t-M_t})-x(t-M_t+\zeta_{t-M_t}){|}_{P_1}^2 \nonumber\\\leq&e^{-\eta(\zeta_{t-M_t})} {|}\hat{\chi}_t^*-x(t-M_t){|}_{P_2}^2 	\label{eq:eiossEst_eps}
		\\&+\int_{0}^{\zeta_{t-M_t}}e^{-\eta(\zeta_{t-M_t}-\tau)}{|}\hat{w}_t^*(\tau)-w(t-M_t+\tau){|}_{Q_w}^2 d\tau.\nonumber
\end{align}
Using the fact that
\begin{align*}
	\begin{aligned}
		&{|}\hat{x}_t^*(\zeta_{t-M_t})-x(t-M_t+\zeta_{t-M_t}){|}_{P_2}^2\\\leq&\lambda_{\text{max}}(P_2,P_1){|}\hat{x}_t^*(\zeta_{t-M_t})-x(t-M_t+\zeta_{t-M_t}){|}_{P_1}^2
	\end{aligned}
\end{align*} 
and combining (\ref{eq:eiossEst}) and (\ref{eq:eiossEst_eps}) we can write
\begin{align}
	&	{|}\hat{e}(t){|}_{P_1}^2=	{|}\hat{x}(t)-x(t){|}_{P_1}^2 \nonumber\\\leq& \lambda_{\text{max}}(P_2,P_1) \big(e^{-\eta M_t} {|}\hat{\chi}_t^*-x(t-M_t){|}_{P_2}^2\nonumber\\&+\int_{0}^{M_t}e^{-\eta(M_t-\tau)}{|}\hat{w}_t^*(\tau)-w(t-M_t+\tau){|}_{Q_w}^2 	\label{eq:eiossEstcomb}\\& 
		+\sum_{\tau+t-M_t\in [t-M_t,t] \cap K_s}e^{-\eta(M_t-\tau)}\Big({|}\hat{v}_t^*(\tau)-v(t-M_t+\tau){|}_{Q_v}^2\nonumber\\&+{|}\hat{y}_t^*(\tau)-y(t-M_t+\tau){|}_R^2\Big).\nonumber
\end{align}
Note that for $t<M$, (\ref{eq:ass1}) is directly applicable, and thus (\ref{eq:eiossEstcomb}) is a valid bound for all $t\geq 0$. 
Since 
\begin{align*}
	\begin{aligned}
		{|}\hat{w}_t^*(\tau)-w(t-M_t+\tau){|}_{Q_w}^2\leq& 2{|}\hat{w}_t^*(\tau){|}_{Q_w}^2+ 2{|}w(t-M_t+\tau){|}_{Q_w}^2,
		\\
		{|}\hat{v}_t^*(\tau)-v(t-M_t+\tau){|}_{Q_v}^2\leq& 2{|}\hat{v}_t^*(\tau){|}_{Q_v}^2+ 2{|}v(t-M_t+\tau){|}_{Q_v}^2
	\end{aligned}
\end{align*}
and
\begin{align*}
	\begin{aligned}
		&{|}\hat{\chi}_t^*-x(t-M_t){|}_{P_2}^2\\=& {|}\hat{x}(t-M_t)-x(t-M_t)+\hat{\chi}_t^*-\hat{x}(t-M_t){|}_{P_2}^2\\
		\leq &2{|}\hat{x}(t-M_t)-x(t-M_t){|}_{P_2}^2+2{|}\hat{\chi}_t^*-\hat{x}(t-M_t){|}_{P_2}^2,
	\end{aligned}
\end{align*}
we obtain%
\begin{align*}
	\begin{aligned}
		{|}\hat{e}(t){|}_{P_1}^2 \leq& \lambda_{\text{max}}(P_2,P_1)\Big(2 e^{-\eta M_t} {|}\hat{x}(t-M_t)-x(t-M_t){|}_{P_2}^2\\&+\int_{t-M_t}^{t}2e^{-\eta (t-\tau)} {|}w(\tau){|}_{Q_w}^2 d \tau \\&+ \sum_{\tau \in [t-M_t,t]\cap K_s}2 e^{-\eta (t-\tau)} {|}v(\tau){|}_{Q_v}^2 \\&+J(\hat{\chi}_t^*, \hat{w}_t^*,\hat{v}_t^*, \hat{y}_t^*,t)\Big).
	\end{aligned}
\end{align*}
By optimality, we have  $J(\hat{\chi}_t^*, \hat{w}_t^*,\hat{v}_t^*, \hat{y}_t^*,t) \leq J(x(t-M_t), w_t,v_t, y_t,t)$ 	with  $w_t$ and  $v_t$ referring to the true disturbance trajectories on the interval $[t-M_t, t]$. Thus,
\begin{align}
		{|}\hat{e}(t){|}_{P_1}^2 \leq& \lambda_{\text{max}}(P_2,P_1)\Big(4e^{-\eta M_t} {|}\hat{x}(t-M_t)-x(t-M_t){|}_{P_2}^2\nonumber
		\\&+4\int_{t-M_t}^{t}e^{-\eta (t-\tau)} {|}w(\tau){|}_{Q_w}^2 d \tau 	\label{eq:boundMtstep1}
		\\&+ 4\sum_{\tau \in [t-M_t,t]\cap K_s} e^{-\eta (t-\tau)} {|}v(\tau){|}_{Q_v}^2 \Big).\nonumber
\end{align} 
Since ${|}\hat{x}(t-M_t)-x(t-M_t){|}_{P_2}^2\leq\lambda_{\text{max}}(P_2,P_1){|}\hat{x}(t-M_t)-x(t-M_t){|}_{P_1}^2$, 
\begin{align*}
	\begin{aligned}
		{|}\hat{e}(t){|}_{P_1}^2  \leq &4 e^{-\eta M_t} \lambda^2_{\text{max}}(P_2,P_1){|}\hat{x}(t-M_t)-x(t-M_t){|}_{P_1}^2\\&+4\lambda_{\text{max}}(P_2,P_1)\Big( \int_{t-M_t}^{t}e^{-\eta (t-\tau)} {|}w(\tau){|}_{Q_w}^2 d \tau\\& + \sum_{\tau \in [t-M_t,t]\cap K_s}e^{-\eta (t-\tau)} {|}v(\tau){|}_{Q_v}^2\Big).
	\end{aligned}
\end{align*}
Selecting the horizon length  $M$ large enough such that 
\begin{align*}
	e^{-\tilde{\eta}M}:=4\lambda^2_{\text{max}}(P_2,P_1)e^{-\eta{M}}<1
\end{align*}
with $\tilde\eta \in \mathbb{R}_{\geq0}$, we obtain for all $t\geq M$
\begin{align}
	\begin{aligned}
		{|}\hat{e}(t){|}_{P_1}^2&\leq e^{-\tilde{\eta}M_t} {|}\hat{e}(t-M_t){|}_{P_1}^2\\&+4\lambda_{\text{max}}(P_2,P_1)\Big( \int_{t-M_t}^{t}e^{-\eta (t-\tau)} {|}w(\tau){|}_{Q_w}^2 d \tau \\&+ \sum_{\tau \in [t-M_t,t]\cap K_s}e^{-\eta (t-\tau)} {|}v(\tau){|}_{Q_v}^2\Big).
	\end{aligned}
	\label{eq:boundMstep}
\end{align}
Consider an arbitrary $t\geq M$ and note that it can be decomposed as $t=l+T$ with $l\in [0,M)$ and $T\geq M$ as specified below. 
Using (\ref{eq:boundMtstep1}) we obtain
\begin{align}
	\hspace{-0.6em}	
		\begin{aligned}
		{|}\hat{e}(l){|}_{P_1}^2 \leq & 4 \lambda_{\text{max}}(P_2,P_1)\big( e^{- \eta l} {|}\hat{e}(0){|}_{P_2}^2\\&+\int_{0}^{l}e^{-\eta(l-\tau)} {|}w(\tau){|}_{Q_w}^2 d\tau\\&+\sum_{\tau\in [0,l]\cap K_s}e^{-\eta(l-\tau)} {|}v(\tau){|}_{Q_v}^2\big).
			\end{aligned}
	\label{eq:boundlk}
\end{align}
The  $T$-long time interval consists of $\kappa$ time intervals $[k_{i+1},k_i]$, $i=1,\ldots, \kappa$ with \mbox{$k_{i+1}=k_i-M-\delta_{k_i}$} and $k_{1}=t$. Recall $\delta_{k_i}=k_i-\max\{0,j \in K_s|j\leq k_i\}$.
Applying (\ref{eq:boundMstep})  for each of the $\kappa$ time intervals and using $\eta \leq \tilde\eta $ yields
\begin{align}
		{|}\hat{e}(k_i){|}_{P_1}^2\leq& e^{-\tilde\eta(M+\delta_{k_i})} {|}\hat{e}(k_{i+1}){|}_{P_1}^2\nonumber\\&+4\lambda_{\text{max}}(P_2,P_1)\Big(\int_{k_i-M-\delta_{k_i}}^{k_i}e^{-\tilde\eta(k_i-\tau)} {|}w(\tau){|}_{Q_w}^2 d\tau \nonumber\\&+\sum_{\tau\in[k_i-M-\delta_{k_i},k_i]\cap K_s}e^{-\tilde\eta(k_i-\tau)} {|}v(\tau){|}_{Q_v}^2\Big). 	\label{eq:boundMstepk}
\end{align}
Applying (\ref{eq:boundMstepk}) recursively we derive the following
upper bound for the estimation error
\begin{align}
		&{|}\hat{e}(t){|}_{P_1}^2 \nonumber\\\leq\phantom{.}& e^{-\tilde\eta{T}}	{|}\hat{e}(l){|}_{P_1}^2+4\lambda_{\text{max}}(P_2,P_1)\Big(\int_{l}^{t}e^{-\tilde\eta(t-\tau)}{|}w(\tau){|}_{Q_w}^2 d\tau\nonumber\\ &+\sum_{\tau\in[l,t]\cap K_s} e^{-\tilde\eta(t-\tau)}2{|}v(\tau){|}_{Q_v}^2\Big) \nonumber\\
		\stackrel{(\ref{eq:boundlk})}{\leq}& 4\lambda_{\text{max}}(P_2,P_1)e^{-\tilde\eta{T}}\Big(e^{-\eta l} {|}\hat{e}(0){|}_{P_2}^2\nonumber\\&+\int_{0}^{l}e^{-\eta(l-\tau)} {|}w(\tau){|}_{Q_w}^2 d\tau	\label{eq:boundkMstepk}
		\\&+\sum_{\tau\in [0,l]\cap K_s}e^{-\eta(l-\tau)} {|}v(\tau){|}_{Q_v}^2\Big) \nonumber\\&+4\lambda_{\text{max}}(P_2,P_1)\Big(\int_{l}^{t}e^{-\tilde\eta(t-\tau)}{|}w(\tau){|}_{Q_w}^2 d\tau\nonumber\\&+\sum_{\tau\in[l,t]\cap K_s} e^{-\tilde\eta(t-\tau)}2{|}v(\tau){|}_{Q_v}^2\Big).\nonumber
\end{align}
Note that the factor 2 in the summation term over the interval $[l,T]$ arises because recursively applying (\ref{eq:boundMstepk}) causes $v(\tau)$ to appear twice in the final bound for every  $\tau \in \{k_i\}_{i=2}^{\kappa-1}$.
Due to $\eta\leq \tilde\eta$ and $t=l+T$ we can write
\begin{align}
	\begin{aligned}
		{|}\hat{e}(t){|}_{P_1}^2&\leq 4 \lambda_{\mathrm{max}}(P_2,P_1)\Big(e^{-\tilde\eta t}{|}\hat{e}(0){|}_{P_2}^2
		\\&+\int_{0}^{t} e^{-\tilde\eta (t-\tau)}  {|}w(\tau){|}_{Q_w}^2 \ d\tau\\&+\sum_{\tau \in [0,t] \cap K_s}
		e^{-\tilde\eta (t-\tau)}  2{|}v(\tau){|}_{Q_v}^2\Big)
	\end{aligned}
	\label{eq:boundsum}
\end{align}
Note that since in (\ref{eq:boundkMstepk}) $e^{-\tilde\eta(t-l)}{|}v(l){|}_{Q_v}^2$ is already considered with the factor 2 in the interval $[l,T]$, we can omit \emergencystretch=2em $e^{-\tilde\eta T}{|}v(l){|}_{Q_v}^2$ from (\ref{eq:boundkMstepk}) to obtain (\ref{eq:boundsum}).
For the case $t<M$, (\ref{eq:boundsum}) can be obtained directly from (\ref{eq:boundMtstep1}). We conclude that (\ref{eq:boundsum}) holds for all $t \geq 0$, completing the proof.	\qed                                        
\end{document}